\newtheorem{theorem}{Theorem}
\newtheorem{lemma}{Lemma}
\newtheorem{definition}{Definition}
\newtheorem{corollary}{Corollary}[theorem]
\newcommand*{\sbin}{\{0,1\}}
\newcommand{\kron}{\otimes}
\newcommand{\ptrs}[1]{\operatorname{tr}_{#1}}
\newcommand{\id}{{\mathbbm{1}}}
\newcommand{\vecstate}[1]{\ket{#1}\bra{#1}}
\newcommand{\mE}{\mathcal E}
\newcommand{\idx}[2]{{#1}_{#2}}
\newcommand{\psiAB}{\ensuremath{\idx{\psi}{AB}}}
\newcommand{\mX}{\mathcal X}
\newcommand{\hi}[1]
{\ensuremath{\mathcal{H}_{\textnormal{#1}}}}
\newcommand{\proj}[1]{|#1\rangle\langle#1|}
\newcommand{\hA}{\hi{A}}
\newcommand{\hAB}{\hi{AB}}
\newcommand{\hB}{\hi{B}}
\newcommand{\rhoA}{\ensuremath{\idx{\rho}{A}}}
\newcommand{\rhoB}{\ensuremath{\idx{\rho}{B}}}
\newcommand{\rhoAB}{\ensuremath{\idx{\rho}{AB}}}
\newcommand{\idA}{\idi{A}}
\newcommand{\ptrace}[2]{\ensuremath{\ptr{#1} (#2)}}
\newcommand{\ptr}[1]{\operatorname{tr}_{\textnormal{#1}}}
\newcommand{\hR}{\hi{R}}
\newcommand{\hX}{\hi{X}}
\newcommand{\states}[1]{\ensuremath{\mathcal{S}(#1)}}
\newcommand{\posops}[1]{\ensuremath{\mathcal{P}(#1)}}
\newcommand{\h}{\ensuremath{\mathcal{H}}}
\newcommand{\idi}[1]{\ensuremath{\mathds{1}_{\textnormal{\tiny #1}}}}
\newcommand{\eps}{\varepsilon}
\newcommand{\dis}{D}
\begin{document}

\title{Impossibility of Quantum Private Queries}

\author{Esther H\"anggi}
\thanks{These authors contributed equally to this work.}
\affiliation{Lucerne School of Computer Science and Information Technology, Lucerne University of Applied Sciences and Arts, Rotkreuz, Switzerland}

\author{Severin Winkler}
\thanks{These authors contributed equally to this work.}
\affiliation{Ergon Informatik AG, Zurich, Switzerland}

\date{\displaydate{date}}

\begin{abstract}
Symmetric private information retrieval is a cryptographic task allowing a user to query a database and obtain exactly one entry without revealing to the owner of the database which element was accessed. The task is a variant of general two-party protocols called \textit{one-sided secure function evaluation} and is closely related to oblivious transfer.
Under the name \emph{quantum private queries}, quantum protocols have been proposed to solve this problem in a cheat-sensitive way: In such protocols, it is not impossible for dishonest participants to cheat, but they risk detection [V.\ Giovannetti, S.\ Lloyd, and L.\ Maccone, Phys.\ Rev.\ Lett.\ 100, 230502 (2008)].
We give an explicit attack against any cheat-sensitive symmetric private information retrieval protocol, showing that any  protocol that is secure for the user cannot have non-trivial security guarantees for the owner of the database: The user is even able to retrieve the \emph{complete} database.
\end{abstract}

\maketitle

\emph{Introduction.} Quantum physics enables the implementation of cryptographic tasks which are impossible to realize with access to classical communication only. Key distribution~\cite{bb84,ekert} and random-number generation~\cite{qrng,Jennewein2000} are certainly the most prominent examples that are already available commercially. 

From the early days of quantum cryptography, protocols for tasks between two mistrustful parties have also been developed. Wiesner~\cite{wiesner} described ``A means for transmitting two messages either but not both of which may be received,'' which corresponds to oblivious transfer. In their seminal work~\cite{bb84}, Bennett and Brassard proposed not only a quantum key distribution protocol, but also a protocol for coin tossing and implicitly a protocol for bit commitment. Subsequently, quantum protocols for bit commitment and oblivious transfer have been proposed in Refs.~\cite{BC91,BBCS92,BCJL93,Ard95}.

In this work, we study quantum two-party protocols that allow a user to obtain a single entry of their choice from a remote database without revealing to the owner of the database which item they are interested in. This task is known as (single-database) \textit{symmetric private information retrieval}~\cite{CGKS95} and is closely related to a variant of oblivious transfer called 1-out-of-$n$ oblivious transfer.

\begin{figure}[h]
\centering
\begin{tikzpicture}	[scale=0.95,every node/.style={inner sep=1,outer sep=1}]
\node[anchor=east] at (-2.5,1.5){\phantom{$\sum\limits_{x \in \mX_i}\frac{1}{\sqrt{|\mX|}}\ket{xx}$}};
\node[anchor=south west] at (-2,2){\textbf{\large Alice}};
\node[anchor=south east] at (2,2){\textbf{\large Bob}};
\draw[line width=1pt,->](-2.5,1.5)--(-2,1.5);
\node[anchor=east] at (-2.5,1.5){\includegraphics[width=25pt]{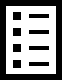}};
\draw[line width=1pt,<-](-2.5,0.5)--(-2,0.5);
\node[anchor=east] at (-2.5,0.5){\textbf{\Large $\bot$}};
\node[anchor=east] at (-2.5,0){\textbf{\Large $\{\text{\textcolor{green}{\ding{51}},\textcolor{red}{\ding{55}}} \}$}};
\draw[line width=1pt,->](2.5,1.5)--(2,1.5);
\node[anchor=west] at (2.5,1.5){\textbf{\Large $i$}};
\draw[line width=1pt,<-](2.5,0.5)--(2,0.5);
\node[anchor=west] at (2.5,0.5){\textbf{\Large $x_i$}};
\node[anchor=west] at (2.5,0){\textbf{\Large $\{\text{\textcolor{green}{\ding{51}},\textcolor{red}{\ding{55}}} \}$}};
\draw[rounded corners=10,line width=1pt](0,2)--(-2,2)--(-2,0)--(2,0)--(2,2)--(0,2);
\node at (0,1){\huge{\sf{\textbf{QPQ}}}};
\end{tikzpicture}
    \caption{Quantum private queries. Alice, the owner of the database, holds a database of $n$ entries $x_1,\ldots,x_n \in \mX^n$. Bob, the user of the database, can send a query $i$ to the database and receives the corresponding database entry $x_i$. If either party tries to cheat and learn more than what is allowed, the other party has some chance to detect this. This is modeled by an additional output from $\{\texttt{accept},\texttt{reject}\}$ for both players.}\vspace{-0.2cm}
    \label{fig:cheat-qpq}
\end{figure}

It was later discovered that many quantum two-party primitives cannot be implemented securely without further assumptions~\cite{Mayers97,LoChau97,Lo97,Kitaev,Ambain01,bcs12,Colbec07}. 
However, quantum protocols still help to limit the degree by which a dishonest participant can violate the security requirements. They can achieve strictly better security bounds than any classical protocol~\cite{SpeRud01,chailloux2009optimal,CGS16,ATVY00,Mochon04,Colbec07}.

\emph{Cheat-sensitive quantum protocols.} A cheat-sensitive setting considers dishonest players who want to avoid being caught cheating. This is a natural model in commercial or social situations where cheaters can be punished~\cite{Franklin92,Aumann2010}. Protocols then do not have to prevent cheating, but must detect violations of the security guarantees with a certain probability. That quantum physics may allow for cheat detection can be seen in the example of quantum key distribution, where eavesdropping is not impossible, but an eavesdropper will be revealed. Even for cryptographic primitives that cannot be implemented securely by quantum protocols, cheat-sensitive versions may still be possible.

In two-party computations, many quantum weak coin tossing protocols can be interpreted as implementing cheat detection, with the party that was detected cheating losing the coin toss~\cite{SpeRud02}. Cheat-sensitive weak coin tossing protocols have even been implemented experimentally~\cite{Neves2023}.

Quantum bit escrow~\cite{ATVY00,hardykent04} is a primitive related to bit commitment with one-sided cheat-sensitivity, which can be achieved using quantum protocols.  
Cheat-sensitive string commitment can be implemented with a security condition that guarantees the receiver to be caught if they learn too much information in the holding phase~\cite{bchlw08}.

\emph{Quantum private qeries.} In Ref.~\cite{GLM08}, the authors propose a cheat-sensitive version of symmetric private information retrieval under the name \emph{quantum private queries} (Figure~\ref{fig:cheat-qpq}). The protocol has been analyzed in Ref.~\cite{GLM10} and implemented in Ref.~\cite{demartini09}.  Similar functionalities have also been proposed in Refs.~\cite{JSGBBWZ11,Olejnik11}. The protocol~\cite{JSGBBWZ11} has been implemented experimentally in Ref.~\cite{Chan2014}.

This work investigates whether there exists a secure protocol for quantum private queries. We answer this question in the negative, demonstrating that securely implementing quantum private queries is impossible, even when tolerating some (small) error $\eps$ and when weakening the security guarantees for the database owner. We first present a specific attack on the protocol in~\cite{GLM08,GLM10}. Then we give a generic attack on \emph{any} quantum private queries protocol. 

In Ref.~\cite{JSGBBWZ11}, it is argued that the impossibility results of Ref.~\cite{Lo97} do not apply because the user privacy of their protocol is not perfect, but only holds with high probability. Indeed, tolerating a small error in the implementation can in general help to build secure two-party protocols~\cite{WW10} that are provably impossible in the perfect case~\cite{SaScSo09}. However, the impossibility result that follows from our generic attack also rules out secure quantum private query protocols in the imperfect case, where correctness and security only hold approximately.

\emph{Attack on protocol~\cite{GLM08}.} The protocol proposed in Ref.~\cite{GLM08}  works as follows:  
The user of the database (Bob) sends either of two query states $\ket{i}$ or $(\ket{0}+\ket{i})/\sqrt{2}$ to the owner of the database (Alice). Which one of the two states is sent first is chosen at random. Alice does a quantum database query and sends back the state $\ket{i}\ket{x_i}$, where $x_i$ is the entry of the database. Bob can retrieve the database entry by measuring the state $\ket{i}\ket{x_i}$ and then check whether Alice has been cheating by measuring whether the other state is in the correct superposition. 

In the following, we consider an answer to a query \emph{valid} if Bob accepts it. In Ref.~\cite{GLM10}, the authors analyzed the protocol under the assumption that every query has a \emph{unique} valid answer. This condition can be met by requiring that each query admits a unique correct answer that Bob can independently verify. Under this assumption, it is not possible to reach information-theoretic security since a computationally unbounded Bob could reconstruct the database exactly as Alice does, simply by computing all valid answers. The authors of Ref.~\cite{GLM08} suggest that the database could store the solutions to a computational problem, which are hard to compute but efficiently verifiable, such as the preimages of a computationally secure one-way function. Bob could then compute the answer to the problem himself, but not efficiently. 
This functionality could \emph{a priori} be of interest. However, oblivious transfer, and thus symmetric private information retrieval, can be constructed from a quantum one-way function~\cite{BCKM21}. The result in Ref.~\cite{BCKM21} even holds if only the user of the database is computationally bounded. Under the assumption that quantum one-way functions exist, it is, therefore, not necessary to introduce cheat-sensitivity and resort to a protocol like quantum private queries.

Thus, in our view, the only compelling setting for quantum private queries arises when database queries may have multiple valid answers. Once the assumption of \emph{unique} valid answers is dropped, Giovannetti et al.~\cite{GLM10} showed that their protocol becomes vulnerable to an attack. To mitigate this specific attack, they proposed a variant in which the user preserves security by repeatedly submitting the same query. However, they left open the question of whether a secure quantum private queries protocol can exist under these additional assumptions or in general.

Our first attack below targets the specific protocol from Ref.~\cite{GLM08}. It is stronger than the one presented in Ref.~\cite{GLM10}: Even a database user who is able to verify all received answers cannot detect this attack. It also allows the database owner to keep answering consistently in a protocol in which the user repeatedly asks the same query to maintain security. Therefore, the countermeasures proposed in Ref.~\cite{GLM10} cannot prevent this attack.

\begin{figure}[h!]
\centering
\begin{tikzpicture}	[scale=0.95,every node/.style={inner sep=1,outer sep=1}]
\node[anchor=south west] at (-2,2){\textbf{\large Alice}};
\node[anchor=south east] at (2,2){\textbf{\large Bob}};
\draw[line width=1pt,->](-2.5,1.5)--(-2,1.5);
\node[anchor=east] at (-2.5,1.5){$\sum\limits_{x \in \mX_i}\frac{1}{\sqrt{|\mX|}}\ket{xx}$};
\draw[line width=1pt,<-](-2.5,0.5)--(-2,0.5);
\node[anchor=east] at (-2.5,0.5){$ \dis(\rho_A^i,\rho_A^{j})$};
\draw[line width=1pt,->](2.5,1.5)--(2,1.5);
\node[anchor=west] at (2.5,1.5){\textbf{\Large $i$}};
\draw[line width=1pt,<-](2.5,0.5)--(2,0.5);
\node[anchor=west] at (2.5,0.5){\textbf{\Large $x_i,\text{\textcolor{green}{\ding{51}}}$}};
\draw[rounded corners=10,line width=1pt](0,2)--(-2,2)--(-2,0)--(2,0)--(2,2)--(0,2);
\node[anchor=east](A1) at (2,1.7){\color{gray}{$\ket{i}$}};
\node(A2) at (-1.5,1.7){};
\node[anchor=west](B1) at (-2,1.3){\color{gray}{$\ket{i}\ket{x_i}$}};
\node(B2) at (1.5,1.3){};
\draw[line width=1pt,->,color=gray](A1) edge (A2);
\draw[line width=1pt,->,color=gray](B1) edge (B2);
\draw[rounded corners=10,line width=1pt](0,2)--(-2,2)--(-2,0)--(2,0)--(2,2)--(0,2);
\node[anchor=east](C1) at (2,0.7){\color{gray}{$\frac{\ket{i}{+}\ket{0}}{\sqrt{2}}$}};
\node(C2) at (-1.5,0.7){};
\node[anchor=west] (D1) at (-2,0.3){\color{gray}{$\frac{\ket{i}\ket{x_i} {+}\ket{0}\ket{d}}{\sqrt{2}}$}};
\node(D2) at (1.5,0.3){};
\draw[line width=1pt,->,color=gray](C1) edge (C2);
\draw[line width=1pt,->,color=gray](D1) edge (D2);
\end{tikzpicture}
    \setlength{\belowcaptionskip}{-6pt}

    \caption{The protocol~\cite{GLM08} is depicted inside the box. Using the state $\ket{\phi^i}_{X_iX'_i} =\sum_{x \in \mX_i}\frac{1}{\sqrt{|\mX_i|}}\ket{xx}_{X_iX'_i}$ with the database values in the second register, a dishonest Alice can distinguish between different queries of Bob. Bob always accepts.}
    \label{fig:concrete_attack}
\end{figure}
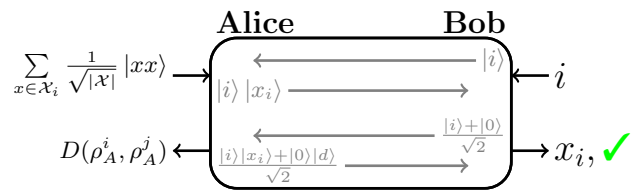

Note that in the case where the queries have multiple valid answers, the owner of the database can, in particular, select among the valid answers at random before the start of the protocol and then execute the protocol with the randomly chosen database entries. The random selection can be represented by a mixed state over valid answers. This can be achieved by creating $\ket{\psi}=\sum_x \sqrt{P_X(x)}\ket{x}\ket{x}$, treating one part of the state as holding the random value and keeping the other part.
\begin{mdframed}\textbf{Specific attack.} (See Figure~\ref{fig:concrete_attack}) Alice, the owner of the database, prepares states $\ket{\phi^i}_{X_iX'_i} {=}
\sum\limits_{x \in \mX_i}\frac{1}{\sqrt{|\mX_i|}}\ket{xx}_{X_iX'_i}$ for all $i\in\{1,\ldots,n\}$\ \footnote{Note that the database for the protocol~\cite{GLM08} uses an additional dummy entry at index $0$. Thus, here the $n$ actual database entries are $x_1,\ldots,x_n$.}, where $\mX_i$ is the set of valid answers for query $i$. She then executes the complete protocol in Refs.~\cite{GLM08,GLM10} with the first register as input while keeping the second register \footnote{This is an attack type often used in quantum two-party protocols and corresponds to the player keeping the purification of the state.}. For the user of the database, this strategy is indistinguishable from a uniform random selection. Since Alice chooses a superposition of valid strategies, this purified strategy is also allowed.
After the end of the protocol, she performs the Helstrom measurement~\cite{helstrom1976quantum} that optimally distinguishes between the marginal states corresponding any two queries $i \neq j \in \{1,\ldots,n\}$.
\end{mdframed}

Since the measurement happens after the protocol has ended, the attack cannot be detected by the other party. 
We consider the two marginal states of the database owner $\rho_A^i$ and $\rho_A^{j}$ that result from the execution of the protocol with any two queries $i \neq j \in \{1,\ldots,n\}$. Using the \emph{trace distance}, $D(\rho,\sigma):= \frac{1}{2}\| \rho-\sigma \|_1$, we obtain the following lower bound on the distinguishability of these two states:
\begin{equation*}
    \dis(\rho_A^i,\rho_A^{j}) \geq 1-\frac{1}{\max\{|\mX_i|,|\mX_j|\}}\;.
\end{equation*}
This violates the security definition in Ref.~\cite{GLM10}, which requires that no two queries can be distinguished. The bound is independent of the number of entries in the database. The mathematical details of the computation of the trace distance are given in the appendix.

\emph{Impossibility of quantum private queries.} We now show that there is an attack on any quantum private queries protocol. Our analysis adopts the standard model of quantum two-party computation~\cite{Mayers97,LoChau97,Lo97}, where the two players have access to a noiseless quantum and a noiseless classical channel. In each round of the protocol, one party may perform, conditioned on the available classical information, an arbitrary quantum operation on the system in their possession. This operation also generates the input for the available communication channels. The total system — comprising the subsystems controlled by Alice and Bob — is initially in a pure state. By introducing ancillary spaces, the quantum operations of both parties can be purified. Thus, we can assume that the state at the end of the protocol is pure, conditioned on all the classical communication.

\emph{Security conditions.} Any secure symmetric private information retrieval protocol must fulfill conditions for correctness, user privacy, and data privacy, at least with a statistical error $\eps \geq 0$. Cheat-sensitive protocols such as quantum private queries do not have to prevent dishonest players from violating the security conditions, but have to detect any violation with probability greater than $\delta \geq 0$. This leads to the following three conditions. Note that these conditions are necessary but not sufficient to show the security of a protocol. In particular, all three conditions necessarily hold for any protocol that is secure according to simulation-based definitions of cheat-sensitive security~\cite{Aumann2010} and secure two-party computation~\cite{FS09} (see also Ref.~\cite{WW10}). We will show that no protocol can fulfill these three necessary conditions; therefore, no secure protocol exists.

Correctness states that the protocol works as expected when both players are honest.
\begin{definition}[Correctness]\label{def:correctness}
Let Alice hold a database with entries $x_1,\ldots ,x_n \in \mX^n$, and let Bob have query $i \in \{1,\ldots,n\}$. 
A quantum private queries protocol is \emph{$\eps$-correct} if in any execution of the protocol where both players are honest:
\begin{equation}\label{eq:correctness}
    \Pr[\text{both players accept} \wedge Y = x_i] \geq 1-\eps
\end{equation}
where $Y$ is the output of Bob.
\end{definition}
Correctness of a protocol according to this definition implies that condition~\eqref{eq:correctness} still holds when the database owner chooses their entries randomly.

User privacy, or security for Bob, requires that the owner of the database cannot learn which query was made from their state $\rho_A^i$ at the end of an execution of the protocol with query $i$, i.e., the marginal state at the end of the protocol is the same for all queries. For $\eps$-security, all $\rho_A^i$ must therefore be $\eps$-close to a fixed state, which implies by the triangle inequality that the states $\rho_A^i$ and $\rho_A^j$ are at least $2\eps$-close.
The existence of such a fixed (ideal) state is, in particular, guaranteed by the existence of a simulator in the real-ideal-setting (see Refs.~\cite{FS09,WW10,HänggiWinkler2024}).
\begin{definition}[Weak User Privacy]\label{def:weak-user-privacy}
A quantum private queries protocol is \emph{weakly $(\eps,\delta)$-secure for the user} of the database if for any strategy of a dishonest Alice that Bob accepts with probability at least $1-\delta$, and any queries $i,j \in \{1,\ldots,n\}$,
\begin{equation}\label{eq:weak-user-privacy}
\dis(\rho_{A}^i, \rho_A^j) \leq 2\eps\;.
\end{equation}
\end{definition}

Finally, a secure quantum private queries protocol prevents the user of the database from obtaining the answers to more than one query from the marginal state at the end of the protocol $\rho_B$.  The fact that the entries of the database are \emph{a priori} unknown to the user is captured by choosing the database entries $X_1$,\ldots, $X_{n}$ uniformly at random from the set of valid answers $\mX_i$ for all $i \in \{1,\ldots,n\}$. For the following definition, we consider attacks
where the user tries to guess two entries $X_i$ and
$X_j$ for fixed indices $i \neq j \in \{1,\ldots,n\}$ by any measurement on his part of the state at the end of the
protocol. In this case, the security condition requires that
the user cannot do significantly better than guessing
one of the two entries.
\begin{definition}[Weak Data Privacy]\label{def:weak-database-privacy}
A quantum private queries protocol is \emph{weakly $(\eps,\delta)$-secure for the database owner} if for any strategy of a dishonest Bob that Alice accepts with probability at least $1-\delta$
\begin{align*}
    \max_{\mE_B}\Pr[\mE_B(\rho_{X_1,\ldots, X_n B}) = (X_i, X_j)] \leq \frac{1}{k}+\eps\;,
\end{align*}
where the maximization is over all measurements on system $B$ of the state $\rho_{X_1,\ldots, X_nB}$ at the end of the protocol that output a pair of values $(Y, Y') \in \mX$, $k:=\min\{|\mX_i|,|\mX_j|\}$ and $i \neq j \in \{1,\ldots,n\}$.
\end{definition}

Correctness according to Definition~\ref{def:correctness} implies that any execution of the protocol where the dishonest player follows the purified protocol is accepted by the honest player with probability at least $1-\eps$. Thus, for any such execution of a protocol, the conditions for user and data privacy must hold with $\delta = \eps$. We use this fact in the proof of our main result to show that there is no quantum private queries protocol for any $\eps \leq \delta \leq 1$ \footnote{A stronger correctness condition would require that both players always accept the protocol if it is honestly executed. Since this stronger requirement satisfies our correctness condition, our impossibility result also holds in this case, for any $0\leq \delta \leq 1$.}. Note that there is a trivial quantum private queries protocol that is secure if $\delta < \eps$ because an honest execution of the protocol can be rejected with probability larger than $\delta$ without violating correctness \footnote{In this trivial protocol, Alice sends all $n$ database entries $x_i$ to Bob, who retrieves the database entry of his choice and always accepts. Alice rejects the protocol with probability $\eps$ independently of everything else. This protocol is correct and obviously secure for Bob. Since no strategy of Bob is accepted with probability at least $1-\delta >1-\eps$, Definition~\ref{def:weak-user-privacy} becomes trivial for $0 \leq \delta < \eps$.}. We will use the notation \emph{$\eps$-secure} for $(\eps,\eps)$-secure.

A primitive related to quantum private queries is \textit{private information retrieval}~\cite{classicalpir}. It does not guarantee data privacy, so a dishonest user may retrieve an arbitrary number of entries. The task admits a trivial and secure solution: The server simply sends the entire database to the user. This naturally raises the question of whether more communication-efficient protocols are possible. The trivial protocol is known to be optimal in terms of communication complexity under perfect security~\cite{classicalpir}, even when quantum communication is allowed~\cite{qpir_nayak}. Quantum protocols with sublinear communication complexity have been proposed~\cite{legall}, but these protocols were subsequently shown to be insecure: Any quantum protocol for private information retrieval must exhibit linear communication complexity, even in the case of approximate user privacy and approximate correctness~\cite{Baumeler2015}.

\emph{A generic attack on quantum private queries.} We now construct an attack on any implementation of quantum private queries. This negatively answers the questions posed in Refs.~\cite{GLM10,Onofri2024} whether \emph{any} secure quantum private queries protocol exists. 

We show that for any quantum private queries protocol that is $\eps$-correct and $\eps$-secure for the user, there exists an attack that allows the user to compute $m$ database entries for any $2 \leq m \leq n$ with high probability, contradicting weak data privacy. 

The attack follows the intuition of Lo's initial impossibility proof~\cite{Lo97} that there exists a local operation for the user of the database to recover any other entry. Our analysis provides a quantitative bound that also applies in the non-asymptotic case, where the error is a constant. 

\begin{figure}[ht!]
\centering
\begin{tikzpicture}	[scale=0.95,every node/.style={inner sep=1,outer sep=1}]
\node[anchor=east] at (-2.5,1.5){\phantom{$\sum\limits_{x \in \mX_i}\frac{1}{\sqrt{|\mX|}}\ket{xx}$}};
\node[anchor=west] at (2.5,0){\phantom{\textbf{\Large $\{\text{\textcolor{green}{\ding{51}},\textcolor{white}{\ding{55}}} \}$}}};
\node[anchor=south west] at (-2,2){\textbf{\large Alice}};
\node[anchor=south east] at (2,2){\textbf{\large Bob}};
\draw[line width=1pt,->](-2.5,1.5)--(-2,1.5);
\node[anchor=east] at (-2.5,1.5){\includegraphics[width=25pt]{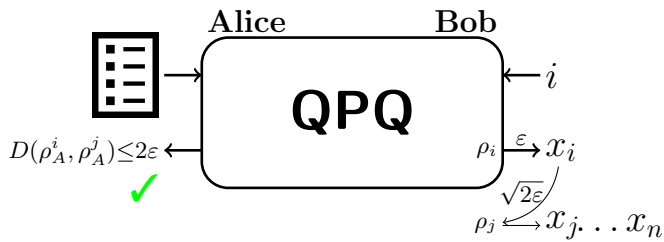}};
\draw[line width=1pt,<-](-2.5,0.5)--(-2,0.5);
\node[anchor=east] at (-2.5,0.5){$\dis(\rho_A^i,\rho_A^{j}){\leq}2\eps$};
\node[anchor=east] at (-2.5,0){\textbf{\Large $\text{\textcolor{green}{\ding{51}}} $}};
\draw[line width=1pt,->](2.5,1.5)--(2,1.5);
\node[anchor=west] at (2.5,1.5){\textbf{\Large $i$}};
\draw[line width=1pt,<-](2.5,0.5)--(2,0.5);
\node[anchor=south] at (2.25,0.5){$\eps$};
\node[anchor=west](xi) at (2.5,0.5){\textbf{\Large $x_i$}};
\draw[rounded corners=10,line width=1pt](0,2)--(-2,2)--(-2,0)--(2,0)--(2,2)--(0,2);
\node at (0,1){\huge{\sf{\textbf{QPQ}}}};
\node[anchor=east](rhoi) at (2.0,0.5){$\rho_i$};
\node[anchor=east](rhoj) at (2.0,-0.5){$\rho_j$};
\node[anchor=west](xj) at (2.5,-0.5){\textbf{\Large $x_j$}};
\node[anchor=west] at (2.9,-0.5){\textbf{\Large $\ldots x_n$}};
\draw[->](xi) to [bend left=35]  (rhoj);
\node[anchor=south] at (2.25,-0.3){$\sqrt{2\eps}$};
\draw[->](rhoj) to  (xj);
\end{tikzpicture}
    \caption{Our generic attack on any protocol implementing quantum private queries. Bob follows the protocol honestly except from keeping his state purified and Alice can therefore not detect the attack. If the protocol is $\eps$-correct and $\eps$-secure for the user, then the user can step by step retrieve all database entries with high probability.}
    \label{fig:general_attack}
\end{figure}

\begin{mdframed}[nobreak=true]\textbf{Generic attack.}(See Figure~\ref{fig:general_attack}) Bob, the user of the database, executes the protocol honestly with query $i$.
After the end of the protocol, he computes step by step all database entries: In each step, he applies a measurement to compute the next database entry and then rotates his state by a local unitary transformation.
\end{mdframed}

\begin{figure}[ht]
\centering
\begin{tikzpicture}	[line width=1pt]
    \node(rho1)[align=center, anchor=center] at (0,0) {$\rho_1$};
    \node(rho2)[align=center, anchor=center] at (2,0) {$\rho_2$};   
    \node(rho3)[align=center, anchor=center] at (4,0) {$\rho_3$};
    \node(rhol)[align=center, anchor=center] at (7,0) {$\rho_l$};
    \node(rho2tilde)[align=center, anchor=center] at (2,1.5) {$\bar{\rho}_2$};   
    \node(rho3tilde)[align=center, anchor=center] at (4,2) {$\bar{\rho}_3$};
    \node(rholtilde)[align=center, anchor=center] at (7,2.5) {$\bar{\rho}_l$};
    \node(X1)[align=center, anchor=center] at (0,-2) {$X_1$};
     \node(Prho1)[align=center, anchor=west] at (0-0.85,-2.5) {$\Pr[X_1]{\geq} 1{-}\varepsilon$};
    \node(X2)[align=center, anchor=center] at (2,-2) {$X_2$};
    \node(Prho2)[align=center, anchor=west] at (2-0.85,-2.5) {$\Pr[X_2]{\geq} 1{-}\varepsilon$};  
    \node(X3)[align=center, anchor=center] at (4,-2) {$X_3$};
    \node(Prho3)[align=center, anchor=west] at (4-0.85,-2.5) {$\Pr[X_3]{\geq} 1{-}\varepsilon$};
    \node(Xl)[align=center, anchor=center] at (7,-2) {$X_l$};
    \node(Prhol)[align=center, anchor=west] at (6.5-0.85,-2.5) {$\Pr[X_l]{\geq} 1{-}\varepsilon$};
    \draw[->, line width=0.2mm, color=black] (rho1) -- (X1) node[pos=0.3, anchor = west] {$\mathcal{E}_1$};
    \draw[->, line width=0.2mm, color=black] (rho2) -- (X2) node[pos=0.3, anchor = west] {$\mathcal{E}_2$};
    \draw[->, line width=0.2mm, color=black] (rho3) -- (X3) node[pos=0.3, anchor = west] {$\mathcal{E}_3$};
    \draw[->, line width=0.2mm, color=black] (rhol) -- (Xl) node[pos=0.3, anchor = west] {$\mathcal{E}_l$};
        \draw[|-|, line width=0.2mm, color=red] (rho2) -- (rho2tilde) node[pos=0.5, anchor = south, rotate=-90, yshift=0.2cm, fill=white!30] {$D\leq 3\sqrt{\varepsilon}{+}\varepsilon$};
    \draw[|-|, line width=0.2mm, color=red] (rho3) -- (rho3tilde) node[pos=0.5, anchor = south,rotate=-90, yshift=0.2cm,fill=white!30] {$D\leq 2{\cdot} (3\sqrt{\varepsilon}{+}\varepsilon)$};
    \draw[|-|, line width=0.2mm, color=red] (rhol) -- (rholtilde) node[pos=0.5, anchor = south,rotate=-90, yshift=0.2cm,fill=white!30] {$D\leq (l-1){\cdot} (3\sqrt{\varepsilon}{+}\varepsilon)$};
    \begin{scope}[on background layer]
    \path[->,  line width=0.2mm, bend right=13]
     (X1) edge node[pos=0.3, anchor = west] {$U_{1,2}$} (rho2tilde);
      \path[->,  line width=0.2mm,bend right=13]
     (X2) edge node[pos=0.3, anchor = west] {$U_{2,3}$} (rho3tilde);
      \path[dashed,->,  line width=0.2mm,bend right=13]
     (X3) edge node[pos=0.3, anchor = west] {} (rholtilde);
    \end{scope}
    \end{tikzpicture}
       \caption{\label{fig:seq-operations}The generic attack on any quantum private queries protocol. The sequence of rotations and measurements are all applied after the end of the protocol. We define the state after the $i$th step of the attack $\bar{\rho}_{i+1}:=S_{i+1}(S_{i}(\ldots S_2(\rho_{AB}^1)))$, where each operation $S_i$ corresponds to the measurement succeeded by a unitary transformation. The gentle measurement lemma gives an upper bound on the disturbance caused by the measurement. By Uhlmann's theorem, there is a unitary $U_{i-1,i}$ acting on Bob's system only that rotates the state $\rho_{i-1}$ to $\rho_i$. We show that adding disturbance of the measurement and the imprecision of the rotation introduces an error of at most $3\sqrt{\eps}+\eps$ in terms of the trace distance in each step.}
\end{figure}
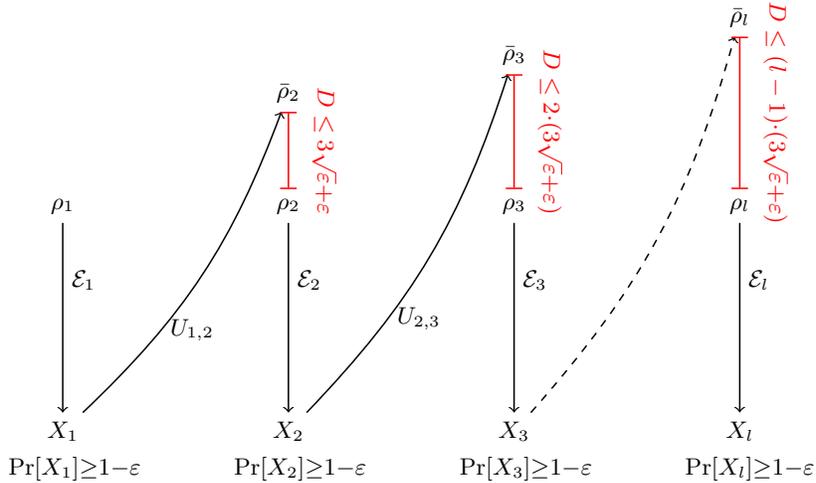

We can put a quantitative lower bound on the success probability of this attack as follows (see Figure~\ref{fig:seq-operations}): The $\eps$-correctness of the protocol implies that Bob has a measurement that outputs the correct database entry with high probability, at least $1-\eps$. By the gentle measurement lemma~\cite{Winter99,Wilde13}, this measurement does not disturb the state much, and the post-measurement state is $\sqrt{\eps}$-close. Weak user privacy guarantees that the two marginal states of Alice resulting from an execution of the protocol with query $i$ and $j$ are $2\eps$-close. By Uhlmann's theorem~\cite{Uhlman76}, this implies that there exists a unitary that rotates Bob's state $\rho_B^i$ to a state $2\sqrt{\eps}$-close to the state $\rho_B^j$ corresponding to another index $j$. From this state, the database entry $x_{j}$ can be obtained (approximately) by measuring. 

The measurement followed by the rotation defines a quantum operation $S_{i,j}$ that computes the $i$th database entry and rotates Bob's state close to $\rho_B^j$. This operation introduces an error of at most $3\sqrt{\eps}+\eps$ in terms of the trace distance. The total disturbance caused to Bob's state by successively applying $1 \leq k \leq n-1$ of these operations can be bounded by $k\times (3\sqrt{\eps}+\eps)$, i.e., by the sum of the disturbances caused by the individual operations. Using this fact, we can bound the probability for each step that Bob's computation fails to determine the correct database entry. By adding these individual errors, we obtain the probability to correctly compute \emph{all} entries as stated in the theorem below. 

Since the rotations and measurements happen after the protocol has ended, the attack cannot be detected by the other party.

The complete proof of the following theorem with all the calculations is given in the appendix. 
\begin{restatable}{theorem}{mainresult}\label{thm:main-result}
For any $\eps \leq \delta \leq 1$ and any quantum private queries protocol that is $\eps$-correct and $(\eps,\delta)$-secure for the user of the database, a dishonest user can retrieve $2\leq m \leq n$ database entries with probability at least
\begin{equation*}
       1-2m^2\sqrt{\eps}\;.
\end{equation*}
\end{restatable}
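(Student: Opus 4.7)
The plan is to make the generic attack sketched above quantitative by tracking how Bob's joint state with Alice evolves through his post-protocol operations. I would begin by purifying both parties' actions and conditioning on the classical transcript of an honest execution with query $i$; call the resulting pure joint state $\ket{\psi^i}_{AB}$, with marginals $\rho_A^i$ and $\rho_B^i$. By $\eps$-correctness of the protocol, there is a POVM $\{M_i^x\}_x$ on Bob's register satisfying $\mathrm{tr}(M_i^{x_i}\rho_B^i)\geq 1-\eps$ when the entries are averaged over the uniform distribution on the valid-answer sets. Moreover, since the honest protocol executed with the purification is accepted by Alice with probability at least $1-\eps\geq 1-\delta$, weak user privacy (Definition~\ref{def:weak-user-privacy}) applies and yields $\dis(\rho_A^i,\rho_A^j)\leq 2\eps$ for every pair of queries $i,j$.

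Two standard tools then drive the proof. First, the Gentle Measurement Lemma shows that the post-measurement state on Bob's side conditioned on observing the correct value $x_i$ lies within trace distance $O(\sqrt{\eps})$ of $\rho_B^i$. Second, from $\dis(\rho_A^i,\rho_A^j)\leq 2\eps$ the Fuchs--van de Graaf inequalities give a fidelity bound $F(\rho_A^i,\rho_A^j)\geq 1-2\eps$; Uhlmann's theorem then delivers a unitary $U_{ij}$ acting only on Bob's register with $\dis\bigl(U_{ij}\rho_B^iU_{ij}^{\dagger},\rho_B^j\bigr)\leq 2\sqrt{\eps}$. Combining the two, I would define a step operation $S_{i,j}$ that first measures $M_i$ (reading off $x_i$) and then applies $U_{ij}$. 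The triangle inequality, applied to the ideal chain $\rho_B^i\to\tilde\rho_B^i\to U_{ij}\tilde\rho_B^iU_{ij}^{\dagger}\to\rho_B^j$, yields $\dis(S_{i,j}(\rho_B^i),\rho_B^j)\leq 3\sqrt{\eps}+\eps$.

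With this building block in hand, the attack becomes the composition $S_{i_{m-1},i_m}\circ\cdots\circ S_{i_1,i_2}$ followed by the final measurement $M_{i_m}$. A short induction using contractivity of the trace distance under CPTP maps shows that the state Bob holds right before the $k$-th measurement differs in trace distance from $\rho_B^{i_k}$ by at most $(k-1)(3\sqrt{\eps}+\eps)$; hence that measurement outputs the correct entry $x_{i_k}$ with probability at least $1-\eps-(k-1)(3\sqrt{\eps}+\eps)$, since measurement-outcome probabilities differ by at most the trace distance. A union bound over $k=1,\dots,m$ then upper-bounds the overall failure probability by $\sum_{k=1}^{m}\bigl[\eps+(k-1)(3\sqrt{\eps}+\eps)\bigr]$, and the bookkeeping $\eps\leq\sqrt{\eps}$ together with $m(m-1)/2\leq m^2$ simplifies the sum to at most $2m^2\sqrt{\eps}$, proving the theorem.

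The main obstacle is not any single deep ingredient but the bookkeeping required to make the errors accumulate \emph{linearly} in the number of steps rather than multiplicatively. The crucial point is that the triangle inequality must be applied between Bob's actual state and the ideal states $\rho_B^{i_k}$ at each step; any naive re-invocation of the Gentle Measurement Lemma on the already-disturbed state, or an application of Uhlmann's theorem to a state that has drifted from $\rho_B^{i_k}$, would degrade the per-step error from $O(\sqrt{\eps})$ to a quantity that iterates as $\eps^{1/2^{k}}$ and destroy the bound. Ensuring that the measurement and the unitary $U_{ij}$ are both CPTP, so that trace distance is contractive, is what keeps the final success-probability deficit quadratic in $m$ rather than exponential.
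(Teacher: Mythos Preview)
Your approach matches the paper's almost exactly: gentle measurement plus Uhlmann give a per-step error of $3\sqrt{\eps}+\eps$, induction via contractivity of CPTP maps gives the $(k-1)$-fold accumulation, and a union bound with elementary estimates yields $2m^2\sqrt{\eps}$. There is, however, one point you must repair. You consistently track Bob's \emph{marginal} $\rho_B^{i_k}$, but the inference ``that measurement outputs the correct entry $x_{i_k}$ with probability at least $1-\eps-(k-1)(3\sqrt{\eps}+\eps)$'' requires closeness of the \emph{joint} state $\rho_{AB}^{i_k}$, since Alice's system carries the random database register $X_{i_k}$ with which Bob's output must agree; closeness of $\rho_B$ alone says nothing about that correlation. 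The fix is immediate---Uhlmann's theorem applied to the purifications $\ket{\psi^i}_{AB}$, $\ket{\psi^j}_{AB}$ of $\rho_A^i,\rho_A^j$ already gives closeness on the full $AB$ state, and the Gentle Measurement Lemma likewise lifts to the joint system (the paper packages this as a cq-state version, its Lemma~\ref{lem:gentle-measurement-cq}, which is also where the extra $+\eps$ in $3\sqrt{\eps}+\eps$ comes from)---so simply replace $\rho_B$ by $\rho_{AB}$ throughout your chain.

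A secondary subtlety you gloss over is that $\ket{\psi^i}_{AB}$ is pure only \emph{conditioned} on the classical transcript, so the overall state is a cq-state and the Uhlmann unitary must be chosen transcript-by-transcript; the paper handles this via a dedicated lemma (Lemma~\ref{lem:classical-attack}) rather than invoking Fuchs--van de Graaf plus Uhlmann directly, but this is again a routine patch to what you wrote.
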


This result also immediately implies the impossibility of protocols without cheat-sensitivity, since security must hold against any strategy according to a non-cheat-sensitive security definition, i.e., in the so-called malicious model. 

The proof of Theorem~\ref{thm:main-result} implies that there is no quantum private queries protocol for databases with $n \geq 2$ entries that is $\eps$-correct and weakly $\eps$-secure for both the owner and the user of the database for any $\eps \leq \frac{1}{64}$ (see Corollary~1.1 in the appendix). Since Bob can potentially learn \emph{all} entries with significant probability, the attack of Theorem~\ref{thm:main-result} can violate any sensible definition of security for Alice, even a very weak definition that only prevents a dishonest Bob from learning all database entries. In particular, this excludes the existence of any protocol that is secure according to the security condition considered in Ref.~\cite{GLM08}, which allows a dishonest user to obtain the answers to two queries.

Theorem~\ref{thm:main-result} excludes the possibility of secure quantum private query protocols for any database admitting at least two queries with multiple valid answers. In all remaining cases, security can only be guaranteed against computationally bounded users, and this can be achieved without relying on cheat-sensitive protocols (see Corollary~1.1 and Remarks~1 and 2 in the appendix).

The impossibility result of Theorem~\ref{thm:main-result} also holds for the secure evaluation of any function and, in particular, for different variants of oblivious transfer~\cite{HänggiWinkler2024}.

The proof techniques developed for this result can be used to prove results on the efficiency of protocols that implement a two-party function from oblivious transfer or other resource primitives~\cite{HänggiWinkler2024}. For example, it can be shown that securely implementing an instance of 1-out-of-$n$ oblivious transfer requires at least approximately $n-1$ instances of 1-out-of-2 oblivious transfer, also when using quantum protocols. 

\emph{Discussion.} We have shown the impossibility of secure quantum protocols for cheat-sensitive symmetric private information retrieval by giving an explicit attack on any such protocol. This generic attack can be applied even when the functionality is only approximated and even if the security guarantees for the owner of the database are strongly relaxed. It implies that any theoretical protocol proposed for quantum private queries or experimental implementation can only be secure against a restricted adversary. Finding the minimal physical or computational assumptions that enable quantum symmetric private information retrieval is an open problem to be answered in future research. It would also be interesting to investigate whether there exist assumptions allowing secure protocols only in the cheat-sensitive setting, but not in the non–cheat-sensitive model.

\begin{acknowledgments}
\emph{Acknowledgments.}|\,The authors would like to thank J\"urg Wullschleger for helpful discussions and \"Amin Baumeler, Lorenzo Laneve, and Stefan Wolf for comments on an earlier draft of this paper. This work was supported by the Swiss National Science Foundation Practice-to-Science Grant No. 199084.
\end{acknowledgments}

\bibliography{refs}

\begin{thebibliography}{56}%
\makeatletter
\providecommand \@ifxundefined [1]{%
 \@ifx{#1\undefined}
}%
\providecommand \@ifnum [1]{%
 \ifnum #1\expandafter \@firstoftwo
 \else \expandafter \@secondoftwo
 \fi
}%
\providecommand \@ifx [1]{%
 \ifx #1\expandafter \@firstoftwo
 \else \expandafter \@secondoftwo
 \fi
}%
\providecommand \natexlab [1]{#1}%
\providecommand \enquote  [1]{``#1''}%
\providecommand \bibnamefont  [1]{#1}%
\providecommand \bibfnamefont [1]{#1}%
\providecommand \citenamefont [1]{#1}%
\providecommand \href@noop [0]{\@secondoftwo}%
\providecommand \href [0]{\begingroup \@sanitize@url \@href}%
\providecommand \@href[1]{\@@startlink{#1}\@@href}%
\providecommand \@@href[1]{\endgroup#1\@@endlink}%
\providecommand \@sanitize@url [0]{\catcode `\\12\catcode `\$12\catcode
  `\&12\catcode `\#12\catcode `\^12\catcode `\_12\catcode `\%12\relax}%
\providecommand \@@startlink[1]{}%
\providecommand \@@endlink[0]{}%
\providecommand \url  [0]{\begingroup\@sanitize@url \@url }%
\providecommand \@url [1]{\endgroup\@href {#1}{\urlprefix }}%
\providecommand \urlprefix  [0]{URL }%
\providecommand \Eprint [0]{\href }%
\providecommand \doibase [0]{https://doi.org/}%
\providecommand \selectlanguage [0]{\@gobble}%
\providecommand \bibinfo  [0]{\@secondoftwo}%
\providecommand \bibfield  [0]{\@secondoftwo}%
\providecommand \translation [1]{[#1]}%
\providecommand \BibitemOpen [0]{}%
\providecommand \bibitemStop [0]{}%
\providecommand \bibitemNoStop [0]{.\EOS\space}%
\providecommand \EOS [0]{\spacefactor3000\relax}%
\providecommand \BibitemShut  [1]{\csname bibitem#1\endcsname}%
\let\auto@bib@innerbib\@empty
\bibitem [{\citenamefont {Bennett}\ and\ \citenamefont
  {Brassard}(1984)}]{bb84}%
  \BibitemOpen
  \bibfield  {author} {\bibinfo {author} {\bibfnamefont {C.~H.}\ \bibnamefont
  {Bennett}}\ and\ \bibinfo {author} {\bibfnamefont {G.}~\bibnamefont
  {Brassard}},\ }\bibfield  {title} {\bibinfo {title} {{Quantum cryptography:
  Public key distribution and coin tossing}},\ }in\ \href@noop {} {\emph
  {\bibinfo {booktitle} {Proceedings of IEEE International Conference on
  Computers, Systems, and Signal Processing}}}\ (\bibinfo {address} {India},\
  \bibinfo {year} {1984})\ p.\ \bibinfo {pages} {175}\BibitemShut {NoStop}%
\bibitem [{\citenamefont {Ekert}(1991)}]{ekert}%
  \BibitemOpen
  \bibfield  {author} {\bibinfo {author} {\bibfnamefont {A.~K.}\ \bibnamefont
  {Ekert}},\ }\bibfield  {title} {\bibinfo {title} {Quantum cryptography based
  on {B}ell's theorem},\ }\href {https://doi.org/10.1103/PhysRevLett.67.661}
  {\bibfield  {journal} {\bibinfo  {journal} {Physical Review Letters}\
  }\textbf {\bibinfo {volume} {67}},\ \bibinfo {pages} {661} (\bibinfo {year}
  {1991})}\BibitemShut {NoStop}%
\bibitem [{\citenamefont {Stefanov}\ \emph {et~al.}(2000)\citenamefont
  {Stefanov}, \citenamefont {Gisin}, \citenamefont {Guinnard}, \citenamefont
  {Guinnard},\ and\ \citenamefont {Zbinden}}]{qrng}%
  \BibitemOpen
  \bibfield  {author} {\bibinfo {author} {\bibfnamefont {A.}~\bibnamefont
  {Stefanov}}, \bibinfo {author} {\bibfnamefont {N.}~\bibnamefont {Gisin}},
  \bibinfo {author} {\bibfnamefont {O.}~\bibnamefont {Guinnard}}, \bibinfo
  {author} {\bibfnamefont {L.}~\bibnamefont {Guinnard}},\ and\ \bibinfo
  {author} {\bibfnamefont {H.}~\bibnamefont {Zbinden}},\ }\bibfield  {title}
  {\bibinfo {title} {Optical quantum random number generator},\ }\href
  {https://doi.org/10.1080/09500340008233380} {\bibfield  {journal} {\bibinfo
  {journal} {Journal of Modern Optics}\ }\textbf {\bibinfo {volume} {47}},\
  \bibinfo {pages} {595} (\bibinfo {year} {2000})}\BibitemShut {NoStop}%
\bibitem [{\citenamefont {Jennewein}\ \emph {et~al.}(2000)\citenamefont
  {Jennewein}, \citenamefont {Achleitner}, \citenamefont {Weihs}, \citenamefont
  {Weinfurter},\ and\ \citenamefont {Zeilinger}}]{Jennewein2000}%
  \BibitemOpen
  \bibfield  {author} {\bibinfo {author} {\bibfnamefont {T.}~\bibnamefont
  {Jennewein}}, \bibinfo {author} {\bibfnamefont {U.}~\bibnamefont
  {Achleitner}}, \bibinfo {author} {\bibfnamefont {G.}~\bibnamefont {Weihs}},
  \bibinfo {author} {\bibfnamefont {H.}~\bibnamefont {Weinfurter}},\ and\
  \bibinfo {author} {\bibfnamefont {A.}~\bibnamefont {Zeilinger}},\ }\bibfield
  {title} {\bibinfo {title} {A fast and compact quantum random number
  generator},\ }\href {https://doi.org/10.1063/1.1150518} {\bibfield  {journal}
  {\bibinfo  {journal} {Review of Scientific Instruments}\ }\textbf {\bibinfo
  {volume} {71}},\ \bibinfo {pages} {1675} (\bibinfo {year}
  {2000})}\BibitemShut {NoStop}%
\bibitem [{\citenamefont {Wiesner}(1983)}]{wiesner}%
  \BibitemOpen
  \bibfield  {author} {\bibinfo {author} {\bibfnamefont {S.}~\bibnamefont
  {Wiesner}},\ }\bibfield  {title} {\bibinfo {title} {Conjugate coding},\
  }\href {https://doi.org/http://doi.acm.org/10.1145/1008908.1008920}
  {\bibfield  {journal} {\bibinfo  {journal} {SIGACT News}\ }\textbf {\bibinfo
  {volume} {15}},\ \bibinfo {pages} {78} (\bibinfo {year} {1983})}\BibitemShut
  {NoStop}%
\bibitem [{\citenamefont {Brassard}\ and\ \citenamefont
  {Cr{\'e}peau}(1991)}]{BC91}%
  \BibitemOpen
  \bibfield  {author} {\bibinfo {author} {\bibfnamefont {G.}~\bibnamefont
  {Brassard}}\ and\ \bibinfo {author} {\bibfnamefont {C.}~\bibnamefont
  {Cr{\'e}peau}},\ }\bibfield  {title} {\bibinfo {title} {Quantum bit
  commitment and coin tossing protocols},\ }in\ \href@noop {} {\emph {\bibinfo
  {booktitle} {Advances in Cryptology-CRYPTO' 90}}},\ \bibinfo {editor} {edited
  by\ \bibinfo {editor} {\bibfnamefont {A.~J.}\ \bibnamefont {Menezes}}\ and\
  \bibinfo {editor} {\bibfnamefont {S.~A.}\ \bibnamefont {Vanstone}}}\
  (\bibinfo  {publisher} {Springer Berlin Heidelberg},\ \bibinfo {address}
  {Berlin, Heidelberg},\ \bibinfo {year} {1991})\ pp.\ \bibinfo {pages}
  {49--61}\BibitemShut {NoStop}%
\bibitem [{\citenamefont {Bennett}\ \emph {et~al.}(1992)\citenamefont
  {Bennett}, \citenamefont {Brassard}, \citenamefont {Cr{\'e}peau},\ and\
  \citenamefont {Skubiszewska}}]{BBCS92}%
  \BibitemOpen
  \bibfield  {author} {\bibinfo {author} {\bibfnamefont {C.~H.}\ \bibnamefont
  {Bennett}}, \bibinfo {author} {\bibfnamefont {G.}~\bibnamefont {Brassard}},
  \bibinfo {author} {\bibfnamefont {C.}~\bibnamefont {Cr{\'e}peau}},\ and\
  \bibinfo {author} {\bibfnamefont {H.}~\bibnamefont {Skubiszewska}},\
  }\bibfield  {title} {\bibinfo {title} {Practical quantum oblivious
  transfer},\ }in\ \href {https://doi.org/10.1007/3-540-46766-1_29} {\emph
  {\bibinfo {booktitle} {Advances in Cryptology --- CRYPTO '91}}}\ (\bibinfo
  {year} {1992})\ pp.\ \bibinfo {pages} {351--366}\BibitemShut {NoStop}%
\bibitem [{\citenamefont {Brassard}\ \emph {et~al.}(1993)\citenamefont
  {Brassard}, \citenamefont {Crepeau}, \citenamefont {Jozsa},\ and\
  \citenamefont {Langlois}}]{BCJL93}%
  \BibitemOpen
  \bibfield  {author} {\bibinfo {author} {\bibfnamefont {G.}~\bibnamefont
  {Brassard}}, \bibinfo {author} {\bibfnamefont {C.}~\bibnamefont {Crepeau}},
  \bibinfo {author} {\bibfnamefont {R.}~\bibnamefont {Jozsa}},\ and\ \bibinfo
  {author} {\bibfnamefont {D.}~\bibnamefont {Langlois}},\ }\bibfield  {title}
  {\bibinfo {title} {A quantum bit commitment scheme provably unbreakable by
  both parties},\ }in\ \href {https://doi.org/10.1109/SFCS.1993.366851} {\emph
  {\bibinfo {booktitle} {Proceedings of 1993 IEEE 34th Annual Foundations of
  Computer Science}}}\ (\bibinfo {year} {1993})\ pp.\ \bibinfo {pages}
  {362--371}\BibitemShut {NoStop}%
\bibitem [{\citenamefont {Ardehali}(1996)}]{Ard95}%
  \BibitemOpen
  \bibfield  {author} {\bibinfo {author} {\bibfnamefont {M.}~\bibnamefont
  {Ardehali}},\ }\href@noop {} {\bibinfo {title} {A quantum bit commitment
  protocol based on epr states}} (\bibinfo {year} {1996}),\ \Eprint
  {https://arxiv.org/abs/quant-ph/9505019} {arXiv:quant-ph/9505019 [quant-ph]}
  \BibitemShut {NoStop}%
\bibitem [{\citenamefont {Chor}\ \emph {et~al.}(1995)\citenamefont {Chor},
  \citenamefont {Goldreich}, \citenamefont {Kushilevitz},\ and\ \citenamefont
  {Sudan}}]{CGKS95}%
  \BibitemOpen
  \bibfield  {author} {\bibinfo {author} {\bibfnamefont {B.}~\bibnamefont
  {Chor}}, \bibinfo {author} {\bibfnamefont {O.}~\bibnamefont {Goldreich}},
  \bibinfo {author} {\bibfnamefont {E.}~\bibnamefont {Kushilevitz}},\ and\
  \bibinfo {author} {\bibfnamefont {M.}~\bibnamefont {Sudan}},\ }\bibfield
  {title} {\bibinfo {title} {Private information retrieval},\ }in\ \href
  {https://doi.org/10.1109/SFCS.1995.492461} {\emph {\bibinfo {booktitle}
  {Proceedings of IEEE 36th Annual Foundations of Computer Science}}}\
  (\bibinfo {year} {1995})\ pp.\ \bibinfo {pages} {41--50}\BibitemShut
  {NoStop}%
\bibitem [{\citenamefont {Mayers}(1997)}]{Mayers97}%
  \BibitemOpen
  \bibfield  {author} {\bibinfo {author} {\bibfnamefont {D.}~\bibnamefont
  {Mayers}},\ }\bibfield  {title} {\bibinfo {title} {Unconditionally secure
  quantum bit commitment is impossible},\ }\href
  {https://doi.org/10.1103/PhysRevLett.78.3414} {\bibfield  {journal} {\bibinfo
   {journal} {Physical Review Letters}\ }\textbf {\bibinfo {volume} {78}},\
  \bibinfo {pages} {3414} (\bibinfo {year} {1997})}\BibitemShut {NoStop}%
\bibitem [{\citenamefont {Lo}\ and\ \citenamefont {Chau}(1997)}]{LoChau97}%
  \BibitemOpen
  \bibfield  {author} {\bibinfo {author} {\bibfnamefont {H.~K.}\ \bibnamefont
  {Lo}}\ and\ \bibinfo {author} {\bibfnamefont {H.~F.}\ \bibnamefont {Chau}},\
  }\bibfield  {title} {\bibinfo {title} {Is quantum bit commitment really
  possible?},\ }\href {https://doi.org/10.1103/PhysRevLett.78.3410} {\bibfield
  {journal} {\bibinfo  {journal} {Physical Review Letters}\ }\textbf {\bibinfo
  {volume} {78}},\ \bibinfo {pages} {3410} (\bibinfo {year}
  {1997})}\BibitemShut {NoStop}%
\bibitem [{\citenamefont {Lo}(1997)}]{Lo97}%
  \BibitemOpen
  \bibfield  {author} {\bibinfo {author} {\bibfnamefont {H.~K.}\ \bibnamefont
  {Lo}},\ }\bibfield  {title} {\bibinfo {title} {Insecurity of quantum secure
  computations},\ }\href {https://doi.org/10.1103/PhysRevA.56.1154} {\bibfield
  {journal} {\bibinfo  {journal} {Physical Review A}\ }\textbf {\bibinfo
  {volume} {56}},\ \bibinfo {pages} {1154} (\bibinfo {year}
  {1997})}\BibitemShut {NoStop}%
\bibitem [{\citenamefont {Kitaev}(2003)}]{Kitaev}%
  \BibitemOpen
  \bibfield  {author} {\bibinfo {author} {\bibfnamefont {A.}~\bibnamefont
  {Kitaev}},\ }\href@noop {} {\bibinfo {title} {Quantum coin-flipping}}
  (\bibinfo {year} {2003}),\ \bibinfo {note} {slides available at
  \url{http://www.msri.org/publications/ln/msri/2002/qip/kitaev/1/index.html}}\BibitemShut
  {NoStop}%
\bibitem [{\citenamefont {Ambainis}(2001)}]{Ambain01}%
  \BibitemOpen
  \bibfield  {author} {\bibinfo {author} {\bibfnamefont {A.}~\bibnamefont
  {Ambainis}},\ }\bibfield  {title} {\bibinfo {title} {A new protocol and lower
  bounds for quantum coin flipping},\ }in\ \href@noop {} {\emph {\bibinfo
  {booktitle} {{STOC~'01}: Proceedings of the 33rd Annual ACM Symposium on
  Theory of Computing}}}\ (\bibinfo {year} {2001})\ pp.\ \bibinfo {pages}
  {134--142}\BibitemShut {NoStop}%
\bibitem [{\citenamefont {Buhrman}\ \emph {et~al.}(2012)\citenamefont
  {Buhrman}, \citenamefont {Christandl},\ and\ \citenamefont
  {Schaffner}}]{bcs12}%
  \BibitemOpen
  \bibfield  {author} {\bibinfo {author} {\bibfnamefont {H.}~\bibnamefont
  {Buhrman}}, \bibinfo {author} {\bibfnamefont {M.}~\bibnamefont
  {Christandl}},\ and\ \bibinfo {author} {\bibfnamefont {C.}~\bibnamefont
  {Schaffner}},\ }\bibfield  {title} {\bibinfo {title} {Complete insecurity of
  quantum protocols for classical two-party computation},\ }\href
  {https://doi.org/10.1103/PhysRevLett.109.160501} {\bibfield  {journal}
  {\bibinfo  {journal} {Physical Review Letters}\ }\textbf {\bibinfo {volume}
  {109}},\ \bibinfo {pages} {160501} (\bibinfo {year} {2012})}\BibitemShut
  {NoStop}%
\bibitem [{\citenamefont {Colbeck}(2007)}]{Colbec07}%
  \BibitemOpen
  \bibfield  {author} {\bibinfo {author} {\bibfnamefont {R.}~\bibnamefont
  {Colbeck}},\ }\href {https://doi.org/10.1103/PhysRevA.76.062308} {\bibinfo
  {title} {The impossibility of secure two-party classical computation}}
  (\bibinfo {year} {2007})\BibitemShut {NoStop}%
\bibitem [{\citenamefont {Spekkens}\ and\ \citenamefont
  {Rudolph}(2001)}]{SpeRud01}%
  \BibitemOpen
  \bibfield  {author} {\bibinfo {author} {\bibfnamefont {R.~W.}\ \bibnamefont
  {Spekkens}}\ and\ \bibinfo {author} {\bibfnamefont {T.}~\bibnamefont
  {Rudolph}},\ }\bibfield  {title} {\bibinfo {title} {Degrees of concealment
  and bindingness in quantum bit commitment protocols},\ }\href
  {https://doi.org/10.1103/PhysRevA.65.012310} {\bibfield  {journal} {\bibinfo
  {journal} {Physical Review A}\ }\textbf {\bibinfo {volume} {65}},\ \bibinfo
  {pages} {012310} (\bibinfo {year} {2001})}\BibitemShut {NoStop}%
\bibitem [{\citenamefont {Chailloux}\ and\ \citenamefont
  {Kerenidis}(2009)}]{chailloux2009optimal}%
  \BibitemOpen
  \bibfield  {author} {\bibinfo {author} {\bibfnamefont {A.}~\bibnamefont
  {Chailloux}}\ and\ \bibinfo {author} {\bibfnamefont {I.}~\bibnamefont
  {Kerenidis}},\ }\bibfield  {title} {\bibinfo {title} {Optimal quantum strong
  coin flipping},\ }in\ \href@noop {} {\emph {\bibinfo {booktitle} {2009 50th
  Annual IEEE Symposium on Foundations of Computer Science}}}\ (\bibinfo
  {organization} {IEEE},\ \bibinfo {year} {2009})\ pp.\ \bibinfo {pages}
  {527--533}\BibitemShut {NoStop}%
\bibitem [{\citenamefont {Chailloux}\ \emph {et~al.}(2016)\citenamefont
  {Chailloux}, \citenamefont {Gutoski},\ and\ \citenamefont {Sikora}}]{CGS16}%
  \BibitemOpen
  \bibfield  {author} {\bibinfo {author} {\bibfnamefont {A.}~\bibnamefont
  {Chailloux}}, \bibinfo {author} {\bibfnamefont {G.}~\bibnamefont {Gutoski}},\
  and\ \bibinfo {author} {\bibfnamefont {J.}~\bibnamefont {Sikora}},\
  }\bibfield  {title} {\bibinfo {title} {Optimal bounds for semi-honest quantum
  oblivious transfer},\ }\bibfield  {journal} {\bibinfo  {journal} {Chicago
  Journal of Theoretical Computer Science}\ }\textbf {\bibinfo {volume}
  {2016}},\ \href {https://doi.org/10.4086/cjtcs.2016.013}
  {10.4086/cjtcs.2016.013} (\bibinfo {year} {2016})\BibitemShut {NoStop}%
\bibitem [{\citenamefont {Aharonov}\ \emph {et~al.}(2000)\citenamefont
  {Aharonov}, \citenamefont {Ta-Shma}, \citenamefont {Vazirani},\ and\
  \citenamefont {Yao}}]{ATVY00}%
  \BibitemOpen
  \bibfield  {author} {\bibinfo {author} {\bibfnamefont {D.}~\bibnamefont
  {Aharonov}}, \bibinfo {author} {\bibfnamefont {A.}~\bibnamefont {Ta-Shma}},
  \bibinfo {author} {\bibfnamefont {U.~V.}\ \bibnamefont {Vazirani}},\ and\
  \bibinfo {author} {\bibfnamefont {A.~C.}\ \bibnamefont {Yao}},\ }\bibfield
  {title} {\bibinfo {title} {Quantum bit escrow},\ }in\ \href
  {https://doi.org/10.1145/335305.335404} {\emph {\bibinfo {booktitle}
  {Proceedings of the Thirty-Second Annual ACM Symposium on Theory of
  Computing}}},\ \bibinfo {series and number} {STOC '00}\ (\bibinfo {year}
  {2000})\ p.\ \bibinfo {pages} {705–714}\BibitemShut {NoStop}%
\bibitem [{\citenamefont {Mochon}(2004)}]{Mochon04}%
  \BibitemOpen
  \bibfield  {author} {\bibinfo {author} {\bibfnamefont {C.}~\bibnamefont
  {Mochon}},\ }\bibfield  {title} {\bibinfo {title} {Quantum weak coin-flipping
  with bias of 0.192},\ }in\ \href@noop {} {\emph {\bibinfo {booktitle}
  {{FOCS~'04}: Proceedings of the 45th Annual IEEE Symposium on Foundations of
  Computer Science}}}\ (\bibinfo {year} {2004})\ pp.\ \bibinfo {pages}
  {2--11}\BibitemShut {NoStop}%
\bibitem [{\citenamefont {Franklin}\ and\ \citenamefont
  {Yung}(1992)}]{Franklin92}%
  \BibitemOpen
  \bibfield  {author} {\bibinfo {author} {\bibfnamefont {M.}~\bibnamefont
  {Franklin}}\ and\ \bibinfo {author} {\bibfnamefont {M.}~\bibnamefont
  {Yung}},\ }\bibfield  {title} {\bibinfo {title} {Communication complexity of
  secure computation (extended abstract)},\ }in\ \href
  {https://doi.org/10.1145/129712.129780} {\emph {\bibinfo {booktitle}
  {Proceedings of the Twenty-Fourth Annual ACM Symposium on Theory of
  Computing}}},\ \bibinfo {series and number} {STOC '92}\ (\bibinfo
  {publisher} {Association for Computing Machinery},\ \bibinfo {address} {New
  York, NY, USA},\ \bibinfo {year} {1992})\ p.\ \bibinfo {pages}
  {699–710}\BibitemShut {NoStop}%
\bibitem [{\citenamefont {Aumann}\ and\ \citenamefont
  {Lindell}(2010)}]{Aumann2010}%
  \BibitemOpen
  \bibfield  {author} {\bibinfo {author} {\bibfnamefont {Y.}~\bibnamefont
  {Aumann}}\ and\ \bibinfo {author} {\bibfnamefont {Y.}~\bibnamefont
  {Lindell}},\ }\bibfield  {title} {\bibinfo {title} {Security against covert
  adversaries: Efficient protocols for realistic adversaries},\ }\href
  {https://doi.org/10.1007/s00145-009-9040-7} {\bibfield  {journal} {\bibinfo
  {journal} {Journal of Cryptology}\ }\textbf {\bibinfo {volume} {23}},\
  \bibinfo {pages} {281} (\bibinfo {year} {2010})}\BibitemShut {NoStop}%
\bibitem [{\citenamefont {Spekkens}\ and\ \citenamefont
  {Rudolph}(2002)}]{SpeRud02}%
  \BibitemOpen
  \bibfield  {author} {\bibinfo {author} {\bibfnamefont {R.~W.}\ \bibnamefont
  {Spekkens}}\ and\ \bibinfo {author} {\bibfnamefont {T.}~\bibnamefont
  {Rudolph}},\ }\bibfield  {title} {\bibinfo {title} {A quantum protocol for
  cheat-sensitive weak coin flipping},\ }\href
  {https://doi.org/10.1103/PhysRevLett.89.227901} {\bibfield  {journal}
  {\bibinfo  {journal} {Physical Review Letters}\ }\textbf {\bibinfo {volume}
  {89}},\ \bibinfo {pages} {227901} (\bibinfo {year} {2002})}\BibitemShut
  {NoStop}%
\bibitem [{\citenamefont {Neves}\ \emph {et~al.}(2023)\citenamefont {Neves},
  \citenamefont {Yacoub}, \citenamefont {Chabaud}, \citenamefont {Bozzio},
  \citenamefont {Kerenidis},\ and\ \citenamefont {Diamanti}}]{Neves2023}%
  \BibitemOpen
  \bibfield  {author} {\bibinfo {author} {\bibfnamefont {S.}~\bibnamefont
  {Neves}}, \bibinfo {author} {\bibfnamefont {V.}~\bibnamefont {Yacoub}},
  \bibinfo {author} {\bibfnamefont {U.}~\bibnamefont {Chabaud}}, \bibinfo
  {author} {\bibfnamefont {M.}~\bibnamefont {Bozzio}}, \bibinfo {author}
  {\bibfnamefont {I.}~\bibnamefont {Kerenidis}},\ and\ \bibinfo {author}
  {\bibfnamefont {E.}~\bibnamefont {Diamanti}},\ }\bibfield  {title} {\bibinfo
  {title} {Experimental cheat-sensitive quantum weak coin flipping},\ }\href
  {https://doi.org/10.1038/s41467-023-37566-x} {\bibfield  {journal} {\bibinfo
  {journal} {Nature Communications}\ }\textbf {\bibinfo {volume} {14}},\
  \bibinfo {pages} {1855} (\bibinfo {year} {2023})}\BibitemShut {NoStop}%
\bibitem [{\citenamefont {Hardy}\ and\ \citenamefont
  {Kent}(2004)}]{hardykent04}%
  \BibitemOpen
  \bibfield  {author} {\bibinfo {author} {\bibfnamefont {L.}~\bibnamefont
  {Hardy}}\ and\ \bibinfo {author} {\bibfnamefont {A.}~\bibnamefont {Kent}},\
  }\bibfield  {title} {\bibinfo {title} {Cheat sensitive quantum bit
  commitment},\ }\href {https://doi.org/10.1103/PhysRevLett.92.157901}
  {\bibfield  {journal} {\bibinfo  {journal} {Physical Review Letters}\
  }\textbf {\bibinfo {volume} {92}},\ \bibinfo {pages} {157901} (\bibinfo
  {year} {2004})}\BibitemShut {NoStop}%
\bibitem [{\citenamefont {Buhrman}\ \emph {et~al.}(2008)\citenamefont
  {Buhrman}, \citenamefont {Christandl}, \citenamefont {Hayden}, \citenamefont
  {Lo},\ and\ \citenamefont {Wehner}}]{bchlw08}%
  \BibitemOpen
  \bibfield  {author} {\bibinfo {author} {\bibfnamefont {H.}~\bibnamefont
  {Buhrman}}, \bibinfo {author} {\bibfnamefont {M.}~\bibnamefont {Christandl}},
  \bibinfo {author} {\bibfnamefont {P.}~\bibnamefont {Hayden}}, \bibinfo
  {author} {\bibfnamefont {H.-K.}\ \bibnamefont {Lo}},\ and\ \bibinfo {author}
  {\bibfnamefont {S.}~\bibnamefont {Wehner}},\ }\bibfield  {title} {\bibinfo
  {title} {Possibility, impossibility, and cheat sensitivity of quantum-bit
  string commitment},\ }\href {https://doi.org/10.1103/PhysRevA.78.022316}
  {\bibfield  {journal} {\bibinfo  {journal} {Physical Review A}\ }\textbf
  {\bibinfo {volume} {78}},\ \bibinfo {pages} {022316} (\bibinfo {year}
  {2008})}\BibitemShut {NoStop}%
\bibitem [{\citenamefont {Giovannetti}\ \emph {et~al.}(2008)\citenamefont
  {Giovannetti}, \citenamefont {Lloyd},\ and\ \citenamefont {Maccone}}]{GLM08}%
  \BibitemOpen
  \bibfield  {author} {\bibinfo {author} {\bibfnamefont {V.}~\bibnamefont
  {Giovannetti}}, \bibinfo {author} {\bibfnamefont {S.}~\bibnamefont {Lloyd}},\
  and\ \bibinfo {author} {\bibfnamefont {L.}~\bibnamefont {Maccone}},\
  }\bibfield  {title} {\bibinfo {title} {Quantum private queries},\ }\href
  {https://doi.org/10.1103/PhysRevLett.100.230502} {\bibfield  {journal}
  {\bibinfo  {journal} {Physical Review Letters}\ }\textbf {\bibinfo {volume}
  {100}},\ \bibinfo {pages} {230502} (\bibinfo {year} {2008})}\BibitemShut
  {NoStop}%
\bibitem [{\citenamefont {Giovannetti}\ \emph {et~al.}(2010)\citenamefont
  {Giovannetti}, \citenamefont {Lloyd},\ and\ \citenamefont {Maccone}}]{GLM10}%
  \BibitemOpen
  \bibfield  {author} {\bibinfo {author} {\bibfnamefont {V.}~\bibnamefont
  {Giovannetti}}, \bibinfo {author} {\bibfnamefont {S.}~\bibnamefont {Lloyd}},\
  and\ \bibinfo {author} {\bibfnamefont {L.}~\bibnamefont {Maccone}},\
  }\bibfield  {title} {\bibinfo {title} {Quantum private queries: Security
  analysis},\ }\href {https://doi.org/10.1109/TIT.2010.2048446} {\bibfield
  {journal} {\bibinfo  {journal} {IEEE Transactions on Information Theory}\
  }\textbf {\bibinfo {volume} {56}},\ \bibinfo {pages} {3465} (\bibinfo {year}
  {2010})}\BibitemShut {NoStop}%
\bibitem [{\citenamefont {De~Martini}\ \emph {et~al.}(2009)\citenamefont
  {De~Martini}, \citenamefont {Giovannetti}, \citenamefont {Lloyd},
  \citenamefont {Maccone}, \citenamefont {Nagali}, \citenamefont {Sansoni},\
  and\ \citenamefont {Sciarrino}}]{demartini09}%
  \BibitemOpen
  \bibfield  {author} {\bibinfo {author} {\bibfnamefont {F.}~\bibnamefont
  {De~Martini}}, \bibinfo {author} {\bibfnamefont {V.}~\bibnamefont
  {Giovannetti}}, \bibinfo {author} {\bibfnamefont {S.}~\bibnamefont {Lloyd}},
  \bibinfo {author} {\bibfnamefont {L.}~\bibnamefont {Maccone}}, \bibinfo
  {author} {\bibfnamefont {E.}~\bibnamefont {Nagali}}, \bibinfo {author}
  {\bibfnamefont {L.}~\bibnamefont {Sansoni}},\ and\ \bibinfo {author}
  {\bibfnamefont {F.}~\bibnamefont {Sciarrino}},\ }\bibfield  {title} {\bibinfo
  {title} {Experimental quantum private queries with linear optics},\ }\href
  {https://doi.org/10.1103/PhysRevA.80.010302} {\bibfield  {journal} {\bibinfo
  {journal} {Physical Review A}\ }\textbf {\bibinfo {volume} {80}},\ \bibinfo
  {pages} {010302} (\bibinfo {year} {2009})}\BibitemShut {NoStop}%
\bibitem [{\citenamefont {Jakobi}\ \emph {et~al.}(2011)\citenamefont {Jakobi},
  \citenamefont {Simon}, \citenamefont {Gisin}, \citenamefont {Bancal},
  \citenamefont {Branciard}, \citenamefont {Walenta},\ and\ \citenamefont
  {Zbinden}}]{JSGBBWZ11}%
  \BibitemOpen
  \bibfield  {author} {\bibinfo {author} {\bibfnamefont {M.}~\bibnamefont
  {Jakobi}}, \bibinfo {author} {\bibfnamefont {C.}~\bibnamefont {Simon}},
  \bibinfo {author} {\bibfnamefont {N.}~\bibnamefont {Gisin}}, \bibinfo
  {author} {\bibfnamefont {J.-D.}\ \bibnamefont {Bancal}}, \bibinfo {author}
  {\bibfnamefont {C.}~\bibnamefont {Branciard}}, \bibinfo {author}
  {\bibfnamefont {N.}~\bibnamefont {Walenta}},\ and\ \bibinfo {author}
  {\bibfnamefont {H.}~\bibnamefont {Zbinden}},\ }\bibfield  {title} {\bibinfo
  {title} {Practical private database queries based on a
  quantum-key-distribution protocol},\ }\href
  {https://doi.org/10.1103/PhysRevA.83.022301} {\bibfield  {journal} {\bibinfo
  {journal} {Physical Review A}\ }\textbf {\bibinfo {volume} {83}},\ \bibinfo
  {pages} {022301} (\bibinfo {year} {2011})}\BibitemShut {NoStop}%
\bibitem [{\citenamefont {Olejnik}(2011)}]{Olejnik11}%
  \BibitemOpen
  \bibfield  {author} {\bibinfo {author} {\bibfnamefont {L.}~\bibnamefont
  {Olejnik}},\ }\bibfield  {title} {\bibinfo {title} {Secure quantum private
  information retrieval using phase-encoded queries},\ }\href
  {https://doi.org/10.1103/PhysRevA.84.022313} {\bibfield  {journal} {\bibinfo
  {journal} {Physical Review A}\ }\textbf {\bibinfo {volume} {84}},\ \bibinfo
  {pages} {022313} (\bibinfo {year} {2011})}\BibitemShut {NoStop}%
\bibitem [{\citenamefont {Chan}\ \emph {et~al.}(2014)\citenamefont {Chan},
  \citenamefont {Lucio-Martinez}, \citenamefont {Mo}, \citenamefont {Simon},\
  and\ \citenamefont {Tittel}}]{Chan2014}%
  \BibitemOpen
  \bibfield  {author} {\bibinfo {author} {\bibfnamefont {P.}~\bibnamefont
  {Chan}}, \bibinfo {author} {\bibfnamefont {I.}~\bibnamefont
  {Lucio-Martinez}}, \bibinfo {author} {\bibfnamefont {X.}~\bibnamefont {Mo}},
  \bibinfo {author} {\bibfnamefont {C.}~\bibnamefont {Simon}},\ and\ \bibinfo
  {author} {\bibfnamefont {W.}~\bibnamefont {Tittel}},\ }\bibfield  {title}
  {\bibinfo {title} {Performing private database queries in a real-world
  environment using a quantum protocol},\ }\href
  {https://doi.org/10.1038/srep05233} {\bibfield  {journal} {\bibinfo
  {journal} {Scientific Reports}\ }\textbf {\bibinfo {volume} {4}},\ \bibinfo
  {pages} {5233} (\bibinfo {year} {2014})}\BibitemShut {NoStop}%
\bibitem [{\citenamefont {Winkler}\ and\ \citenamefont
  {Wullschleger}(2010)}]{WW10}%
  \BibitemOpen
  \bibfield  {author} {\bibinfo {author} {\bibfnamefont {S.}~\bibnamefont
  {Winkler}}\ and\ \bibinfo {author} {\bibfnamefont {J.}~\bibnamefont
  {Wullschleger}},\ }\bibfield  {title} {\bibinfo {title} {On the efficiency of
  classical and quantum oblivious transfer reductions},\ }in\ \href
  {https://doi.org/10.1007/978-3-642-14623-7_38} {\emph {\bibinfo {booktitle}
  {Advances in Cryptology --- CRYPTO '10}}}\ (\bibinfo {year} {2010})\ pp.\
  \bibinfo {pages} {707--723}\BibitemShut {NoStop}%
\bibitem [{\citenamefont {Salvail}\ \emph {et~al.}(2009)\citenamefont
  {Salvail}, \citenamefont {Schaffner},\ and\ \citenamefont
  {Sot\'{a}kov\'{a}}}]{SaScSo09}%
  \BibitemOpen
  \bibfield  {author} {\bibinfo {author} {\bibfnamefont {L.}~\bibnamefont
  {Salvail}}, \bibinfo {author} {\bibfnamefont {C.}~\bibnamefont {Schaffner}},\
  and\ \bibinfo {author} {\bibfnamefont {M.}~\bibnamefont {Sot\'{a}kov\'{a}}},\
  }\bibfield  {title} {\bibinfo {title} {On the power of two-party quantum
  cryptography},\ }in\ \href {https://doi.org/10.1007/978-3-642-10366-7_5}
  {\emph {\bibinfo {booktitle} {Advances in Cryptology --- ASIACRYPT '09}}}\
  (\bibinfo {year} {2009})\ pp.\ \bibinfo {pages} {70--87}\BibitemShut
  {NoStop}%
\bibitem [{\citenamefont {Bartusek}\ \emph {et~al.}(2021)\citenamefont
  {Bartusek}, \citenamefont {Coladangelo}, \citenamefont {Khurana},\ and\
  \citenamefont {Ma}}]{BCKM21}%
  \BibitemOpen
  \bibfield  {author} {\bibinfo {author} {\bibfnamefont {J.}~\bibnamefont
  {Bartusek}}, \bibinfo {author} {\bibfnamefont {A.}~\bibnamefont
  {Coladangelo}}, \bibinfo {author} {\bibfnamefont {D.}~\bibnamefont
  {Khurana}},\ and\ \bibinfo {author} {\bibfnamefont {F.}~\bibnamefont {Ma}},\
  }\bibfield  {title} {\bibinfo {title} {One-way functions imply secure
  computation in a quantum world},\ }in\ \href
  {https://doi.org/10.1007/978-3-030-84242-0_17} {\emph {\bibinfo {booktitle}
  {Advances in Cryptology -- CRYPTO 2021}}}\ (\bibinfo {year} {2021})\ pp.\
  \bibinfo {pages} {467--496}\BibitemShut {NoStop}%
\bibitem [{a()}]{a}%
  \BibitemOpen
  \bibinfo {note} {Note that the database for the protocol~\cite {GLM08} uses
  an additional dummy entry at index $0$. Thus, here the $n$ actual database
  entries are $x_1,\protect \ldots ,x_n$.}\BibitemShut {Stop}%
\bibitem [{b()}]{b}%
  \BibitemOpen
  \bibinfo {note} {This is an attack type often used in quantum two-party
  protocols and corresponds to the player keeping the purification of the
  state.}\BibitemShut {Stop}%
\bibitem [{\citenamefont {Helstrom}(1976)}]{helstrom1976quantum}%
  \BibitemOpen
  \bibfield  {author} {\bibinfo {author} {\bibfnamefont {C.}~\bibnamefont
  {Helstrom}},\ }\href@noop {} {\emph {\bibinfo {title} {Quantum Detection and
  Estimation Theory}}}\ (\bibinfo  {publisher} {Academic Press},\ \bibinfo
  {year} {1976})\BibitemShut {NoStop}%
\bibitem [{\citenamefont {Fehr}\ and\ \citenamefont {Schaffner}(2009)}]{FS09}%
  \BibitemOpen
  \bibfield  {author} {\bibinfo {author} {\bibfnamefont {S.}~\bibnamefont
  {Fehr}}\ and\ \bibinfo {author} {\bibfnamefont {C.}~\bibnamefont
  {Schaffner}},\ }\bibfield  {title} {\bibinfo {title} {Composing quantum
  protocols in a classical environment},\ }in\ \href
  {https://doi.org/10.1007/978-3-642-00457-5_21} {\emph {\bibinfo {booktitle}
  {Theory of Cryptography Conference --- TCC~'09}}}\ (\bibinfo {year} {2009})\
  pp.\ \bibinfo {pages} {350--367}\BibitemShut {NoStop}%
\bibitem [{\citenamefont {Hänggi}\ and\ \citenamefont
  {Winkler}(2024)}]{HänggiWinkler2024}%
  \BibitemOpen
  \bibfield  {author} {\bibinfo {author} {\bibfnamefont {E.}~\bibnamefont
  {Hänggi}}\ and\ \bibinfo {author} {\bibfnamefont {S.}~\bibnamefont
  {Winkler}},\ }\href {https://arxiv.org/abs/2405.12121} {} (\bibinfo {year}
  {2024}),\ \Eprint {https://arxiv.org/abs/2405.12121} {arXiv:2405.12121}
  \BibitemShut {NoStop}%
\bibitem [{Note1()}]{Note1}%
  \BibitemOpen
  \bibinfo {note} {A stronger correctness condition would require that both
  players always accept the protocol if it is honestly executed. Since this
  stronger requirement satisfies our correctness condition, our impossibility
  result also holds in this case, for any $0\leq \delta \leq 1$.}\BibitemShut
  {Stop}%
\bibitem [{Note2()}]{Note2}%
  \BibitemOpen
  \bibinfo {note} {In this trivial protocol, Alice sends all $n$ database
  entries $x_i$ to Bob, who retrieves the database entry of his choice and
  always accepts. Alice rejects the protocol with probability $\varepsilon $
  independently of everything else. This protocol is correct and obviously
  secure for Bob. Since no strategy of Bob is accepted with probability at
  least $1-\delta >1-\varepsilon $, Definition~\ref {def:weak-user-privacy}
  becomes trivial for $0 \leq \delta < \varepsilon $.}\BibitemShut {Stop}%
\bibitem [{\citenamefont {Chor}\ \emph {et~al.}(1998)\citenamefont {Chor},
  \citenamefont {Kushilevitz}, \citenamefont {Goldreich},\ and\ \citenamefont
  {Sudan}}]{classicalpir}%
  \BibitemOpen
  \bibfield  {author} {\bibinfo {author} {\bibfnamefont {B.}~\bibnamefont
  {Chor}}, \bibinfo {author} {\bibfnamefont {E.}~\bibnamefont {Kushilevitz}},
  \bibinfo {author} {\bibfnamefont {O.}~\bibnamefont {Goldreich}},\ and\
  \bibinfo {author} {\bibfnamefont {M.}~\bibnamefont {Sudan}},\ }\bibfield
  {title} {\bibinfo {title} {Private information retrieval},\ }\href
  {https://doi.org/10.1145/293347.293350} {\bibfield  {journal} {\bibinfo
  {journal} {J. ACM}\ }\textbf {\bibinfo {volume} {45}},\ \bibinfo {pages}
  {965–981} (\bibinfo {year} {1998})}\BibitemShut {NoStop}%
\bibitem [{\citenamefont {Nayak}(1999)}]{qpir_nayak}%
  \BibitemOpen
  \bibfield  {author} {\bibinfo {author} {\bibfnamefont {A.}~\bibnamefont
  {Nayak}},\ }\bibfield  {title} {\bibinfo {title} {Optimal lower bounds for
  quantum automata and random access codes}\ }(\bibinfo {year} {1999})\ p.\
  \bibinfo {pages} {369}\BibitemShut {NoStop}%
\bibitem [{\citenamefont {Le~Gall}(2012)}]{legall}%
  \BibitemOpen
  \bibfield  {author} {\bibinfo {author} {\bibfnamefont {F.}~\bibnamefont
  {Le~Gall}},\ }\bibfield  {title} {\bibinfo {title} {Quantum private
  information retrieval with sublinear communication complexity},\ }\href
  {https://doi.org/10.4086/toc.2012.v008a016} {\bibfield  {journal} {\bibinfo
  {journal} {Theory of Computing}\ }\textbf {\bibinfo {volume} {8}},\ \bibinfo
  {pages} {369} (\bibinfo {year} {2012})}\BibitemShut {NoStop}%
\bibitem [{\citenamefont {Baumeler}\ and\ \citenamefont
  {Broadbent}(2015)}]{Baumeler2015}%
  \BibitemOpen
  \bibfield  {author} {\bibinfo {author} {\bibfnamefont {{\"A}.}~\bibnamefont
  {Baumeler}}\ and\ \bibinfo {author} {\bibfnamefont {A.}~\bibnamefont
  {Broadbent}},\ }\bibfield  {title} {\bibinfo {title} {Quantum private
  information retrieval has linear communication complexity},\ }\href
  {https://doi.org/10.1007/s00145-014-9180-2} {\bibfield  {journal} {\bibinfo
  {journal} {Journal of Cryptology}\ }\textbf {\bibinfo {volume} {28}},\
  \bibinfo {pages} {161} (\bibinfo {year} {2015})}\BibitemShut {NoStop}%
\bibitem [{\citenamefont {Onofri}\ and\ \citenamefont
  {Giovannetti}(2024)}]{Onofri2024}%
  \BibitemOpen
  \bibfield  {author} {\bibinfo {author} {\bibfnamefont {S.}~\bibnamefont
  {Onofri}}\ and\ \bibinfo {author} {\bibfnamefont {V.}~\bibnamefont
  {Giovannetti}},\ }\href {https://arxiv.org/abs/2401.05754} {} (\bibinfo
  {year} {2024}),\ \Eprint {https://arxiv.org/abs/2401.05754}
  {arXiv:2401.05754} \BibitemShut {NoStop}%
\bibitem [{\citenamefont {Winter}(1999)}]{Winter99}%
  \BibitemOpen
  \bibfield  {author} {\bibinfo {author} {\bibfnamefont {A.}~\bibnamefont
  {Winter}},\ }\bibfield  {title} {\bibinfo {title} {Coding theorem and strong
  converse for quantum channels},\ }\href {https://doi.org/10.1109/18.796385}
  {\bibfield  {journal} {\bibinfo  {journal} {IEEE Transactions on Information
  Theory}\ }\textbf {\bibinfo {volume} {45}},\ \bibinfo {pages} {2481}
  (\bibinfo {year} {1999})}\BibitemShut {NoStop}%
\bibitem [{\citenamefont {Wilde}(2013)}]{Wilde13}%
  \BibitemOpen
  \bibfield  {author} {\bibinfo {author} {\bibfnamefont {M.~M.}\ \bibnamefont
  {Wilde}},\ }\href {https://doi.org/10.1017/CBO9781139525343} {\emph {\bibinfo
  {title} {Quantum Information Theory}}}\ (\bibinfo  {publisher} {Cambridge
  University Press},\ \bibinfo {year} {2013})\BibitemShut {NoStop}%
\bibitem [{\citenamefont {Uhlmann}(1976)}]{Uhlman76}%
  \BibitemOpen
  \bibfield  {author} {\bibinfo {author} {\bibfnamefont {A.}~\bibnamefont
  {Uhlmann}},\ }\bibfield  {title} {\bibinfo {title} {The transition
  probability in the state space of *-algebra},\ }\href
  {https://doi.org/https://doi.org/10.1016/0034-4877(76)90060-4} {\bibfield
  {journal} {\bibinfo  {journal} {Rep. Math. Phys.}\ }\textbf {\bibinfo
  {volume} {9}},\ \bibinfo {pages} {273} (\bibinfo {year} {1976})}\BibitemShut
  {NoStop}%
\bibitem [{\citenamefont {Nielsen}\ and\ \citenamefont
  {Chuang}(2000)}]{NieChu00}%
  \BibitemOpen
  \bibfield  {author} {\bibinfo {author} {\bibfnamefont {M.}~\bibnamefont
  {Nielsen}}\ and\ \bibinfo {author} {\bibfnamefont {I.}~\bibnamefont
  {Chuang}},\ }\href {https://doi.org/10.1017/CBO9780511976667} {\emph
  {\bibinfo {title} {Quantum Computation and Quantum Information}}}\ (\bibinfo
  {publisher} {{Cambridge University Press}},\ \bibinfo {year}
  {2000})\BibitemShut {NoStop}%
\bibitem [{\citenamefont {Stinespring}(1955)}]{Stine55}%
  \BibitemOpen
  \bibfield  {author} {\bibinfo {author} {\bibfnamefont {W.~F.}\ \bibnamefont
  {Stinespring}},\ }\bibfield  {title} {\bibinfo {title} {Positive functions on
  {$C^*$}-algebras},\ }\href
  {https://doi.org/10.1090/S0002-9939-1955-0069403-4} {\bibfield  {journal}
  {\bibinfo  {journal} {Proceedings of the American Mathematical Society}\
  }\textbf {\bibinfo {volume} {6}},\ \bibinfo {pages} {211} (\bibinfo {year}
  {1955})}\BibitemShut {NoStop}%
\bibitem [{\citenamefont {Renner}(2005)}]{Renner2005}%
  \BibitemOpen
  \bibfield  {author} {\bibinfo {author} {\bibfnamefont {R.}~\bibnamefont
  {Renner}},\ }\emph {\bibinfo {title} {Security of Quantum Key
  Distribution}},\ \href@noop {} {Ph.D. thesis},\ \bibinfo  {school} {{ETH}
  Zurich, Switzerland} (\bibinfo {year} {2005}),\ \bibinfo {note} {available at
  http://arxiv.org/abs/quant-ph/0512258}\BibitemShut {NoStop}%
\bibitem [{\citenamefont {Winkler}\ \emph {et~al.}(2011)\citenamefont
  {Winkler}, \citenamefont {Tomamichel}, \citenamefont {Hengl},\ and\
  \citenamefont {Renner}}]{WTHR11}%
  \BibitemOpen
  \bibfield  {author} {\bibinfo {author} {\bibfnamefont {S.}~\bibnamefont
  {Winkler}}, \bibinfo {author} {\bibfnamefont {M.}~\bibnamefont {Tomamichel}},
  \bibinfo {author} {\bibfnamefont {S.}~\bibnamefont {Hengl}},\ and\ \bibinfo
  {author} {\bibfnamefont {R.}~\bibnamefont {Renner}},\ }\bibfield  {title}
  {\bibinfo {title} {Impossibility of growing quantum bit commitments},\ }\href
  {https://doi.org/10.1103/PhysRevLett.107.090502} {\bibfield  {journal}
  {\bibinfo  {journal} {Physical Review Letters}\ }\textbf {\bibinfo {volume}
  {107}},\ \bibinfo {pages} {090502} (\bibinfo {year} {2011})}\BibitemShut
  {NoStop}%
\end{thebibliography}%

\onecolumngrid
\newpage
\appendix

\setcounter{page}{1}

\section{Impossibility of Quantum Private Queries --- Appendix}

\subsection{Preliminaries}
We refer to a standard textbook on quantum information such as~\cite{NieChu00} for a detailed introduction. 

We consider finite-dimensional Hilbert spaces $\hi{}$. $\posops{\hi{}}$ denotes the set of positive semi-definite operators on~$\hi{}$; $\posops{\hi{}}:\hi{}\rightarrow \hi{}$ .
The set of quantum states $\states{\hi{}}$ is represented by positive semi-definite operators of trace one, $\states{\h} := \{ \rho \in \posops{\h} : \tr\,\rho = 1 \}$. 
We call a state $\rho \in \states{\h}$ \emph{pure} if it has rank one. A pure state $\rho$ can be represented as a ket $\ket{\psi}$, where $\ket{\psi}$ is an element of the Hilbert space $\h$, and the associated quantum state $\rho$ is the outer product $\rho = \proj{\psi}$. A state that is not pure is called \emph{mixed}.  
We will often consider bi- or more-partite quantum states. We use indices to denote multipartite Hilbert spaces, i.e., $\hAB:=\hA\kron\hB$. Given a quantum state $\rhoAB \in \states{\hA\kron\hB}$ we denote by $\rhoA$ and $\rhoB$ its marginal states $\rhoA=\ptrace{B}{\rhoAB}$ and $\rhoB=\ptrace{A}{\rhoAB}$. The symbol $\idA$ denotes either the identity operator on $\hi{A}$ or the identity operator on $\posops{\hA}$; it should be clear from the context which one is meant. 

We denote the distribution of a random variable $X$ by $P_X(x)$. A (classical) probability distribution $P_X$ of a random variable $X$ over $\mX$ can be represented by a quantum state $\rho_X$ using 
an orthonormal basis $\{\ket{x} \mid x\in\mX\}$ of a Hilbert space $\hi{$\mX$}$. The quantum state representing $P_X$ is  $\rho_X=\sum_{x\in\mX}P_X (x) \vecstate{x}$. 
Quantum information about a classical random variable $X$ can be represented by a classical-quantum state or cq-state $\rho_{XB}$ on $\hi{$\mX$}\kron\hi{B}$ of the form $\rho_{XB}=\sum_{x \in\mX}p_x\vecstate{x}\kron\rho_B^x$. 

A mixed state can be seen as part of a pure state on a larger Hilbert space. More precisely, given $\rho_A$ on $\hA$, there exists a pure density operator $\rho_{AB}$ on a
joint system $\hA \kron \hB$ such that 
\begin{equation}\label{eq:purification}
    \rho_A = \ptrace{B}{\rho_{AB}}\;.
\end{equation}
A pure density operator $\rho_{AB}$ for which~\eqref{eq:purification} holds is called a purification of $\rho_A$. For a state representing a classical distribution $\rho_X=\sum_{x\in\mX}P_X (x) \vecstate{x}$, one explicit purification is $\ket{\psi}_{XX'}=\sum_{x\in\mX} \sqrt{P_X (x)} \ket{x}_{X}\kron\ket{x}_{X'}$.

A transformation of a quantum system is represented by a trace-preserving completely positive map (TP-CPM). A \emph{measurement} $\mE$ can be seen as a TP-CPM that takes a quantum state on system $A$ and maps it to a classical register $X$, which contains the measurement result, and a system $A'$, which contains the post-measurement state. Such a measurement map is of the form $\mE(\rhoA)=\sum_x \proj{x}_X\otimes M_x \rho_{A} M_x^\dagger$, where all $\ket{x}$ are from an orthonormal basis of $\hX$ and the $M_x$ are such that $\sum_x M_x^{\dagger}M_x=\id_A$. The probability to obtain the measurement result $x$ is $\tr (  M_x \rho_{A} M_x^\dagger)$. Furthermore, the state of $A'$ after result $x$ is $M_x \rho_{A} M_x^\dagger/\tr (  M_x \rho_{A} M_x^\dagger)$. A \emph{projective} measurement is a special case of a quantum measurement where all measurement operators $M_x$ are orthogonal projectors. If the post-measurement state is not important, and we are only interested in the classical result, the measurement can be fully described by the set of operators $\{ \mE_x\}_x=\{ M_x^\dagger M_x\}_x$. Such a set is 
called a positive operator valued measure (POVM) and we call its elements $\mE_x$ \emph{elements of a measurement}. According to~\cite{Stine55}, any TP-CPM has a Stinespring dilation~\cite{Stine55}, i.e., there exists a Hilbert space $\hR$, a unitary $U$ acting on $\h_{AXA'R}$ and a pure state $\sigma_{XA'R}\in \states{\h_{XA'R}}$ with $\mE(\rhoA)=\ptrace{AR}{U(\rhoA \otimes \sigma_{XA'R}) U^\dagger}$.

\begin{definition}
The \emph{trace-distance} between two quantum states $\rho$ and $\sigma$ is 
\begin{align}
D(\rho,\sigma)&= \frac{1}{2}\| \rho-\sigma \|_1\;,
\end{align} 
where $\|A\|_1=\tr \sqrt{A^\dagger A}$. Two states $\rho$ and $\sigma$ with trace-distance at most $\delta$ are called \emph{$\delta$-close} and we write $\rho \approx_{\delta} \sigma$. 
\end{definition}
The trace-distance corresponds to the maximum distinguishing advantage in the following sense~\cite{NieChu00}: if one receives either $\rho$ or $\sigma$ at random, can make any measurement on the state and then has to guess which one it was, then the best strategy is correct with probability
\begin{equation}\label{eq:op-meaning-trace-distance}
   1/2+D(\rho,\sigma)/2\;.  
\end{equation}
For two cq-states $\rho^b_{XA}=\sum_{x \in \mX}P_X(x)\vecstate{x}\kron\rho^{b,x}_A$
with $b \in \sbin$, the trace distance fulfills the following property (see~\cite{Renner2005} for a proof)
\begin{equation}\label{eq:trace-distance-cq-states}
\dis(\rho^0_A,\rho^1_A)=\sum_{x \in \mX}P_X(x)\dis(\rho^{0,x}_A,\rho^{1.x}_A)\;.
\end{equation}

\subsection{Proof of the distinguishing advantage in the attack on the protocol~\cite{GLM08}}

In this part of the appendix, we give the mathematical details of the attack on the protocol in~\cite{GLM08}. This attack allows Alice to distinguish between different choices of Bob with high probability without Bob being able to detect this attack.

For completeness, we include a description of the original protocol for quantum private queries~\cite{GLM08}:
\begin{enumerate}
\item Alice creates a classical database $\{\ket{x_i}\}_i$ containing $n+1$ entries, where $x_i \in \mX$ for $i \in \{1,\ldots,n\}$ are the actual database entries and $\ket{x_0}=\ket{d}$ is an additional dummy entry. 
\item Bob selects the index $i\in \{1,\dots,n\}$ which he would like to query.
\item Bob tosses a coin $a$. If $a=0$, he prepares the states $\ket{\psi_1}=\ket{i}$ and $\ket{\psi_2}=(\ket{0}+\ket{i})/\sqrt{2}$. If 
$a=1$, then he prepares the states $\ket{\psi_1}=(\ket{0}+\ket{i})/\sqrt{2}$ and $\ket{\psi_2}=\ket{i}$.   
\item Bob sends $\ket{\psi_1}$ to Alice who performs a conditional database query, i.e., she applies the QRAM algorithm (see~\cite{NieChu00,GLM08} and the definition below) and returns the query and the register to Bob.
\item Bob sends $\ket{\psi_2}$ to Alice who performs a conditional database query and returns the query and the register to Bob.
\item \label{prot:enum:7} If $a=0$, Bob measures the first state in the com\-pu\-ta\-tion\-al basis and checks whether the result is $(i,x_i)$. If this first check fails, he caught Alice cheating. He then checks whether the second state is $(\ket{0}\ket{d}+\ket{i}\ket{x_i})/\sqrt{2}$, by projecting onto this state. If the second check fails, Alice has been caught cheating. 

If $a=1$, he exchanges the first and second state and applies the same procedure. 
\end{enumerate}

The QRAM algorithm is defined as follows~\cite{NieChu00,GLM08}.
\begin{definition}\label{def:qram}
Quantum random-access memory (QRAM) is a collection of unitaries $U_{QRAM}(n)$,where $n \in \mathbb{N}$, such that for all states $\ket{i}$ with $i \in \{1,\dots,n\}$ and all states $\ket{x_1,\ldots, x_{n}}$, where $x_i \in \mX$ (referred to as the data registers or memory cells),
\begin{equation*}
    U_{QRAQM}~\ket{i}\ket{0}\ket{x_1,\ldots, x_{n}} = \ket{i}\ket{x_i}\ket{x_1,\ldots, x_{n}}
\end{equation*}
\end{definition}

The authors of~\cite{GLM10} present an attack for a database with 3 entries where query 1 and query 2 admit two distinguishable answers. They also showed that Bob can prevent this attack of Alice by repeating his queries and checking whether Alice answers consistently. In the following, we present a stronger attack that allows Alice to answer consistently and, therefore, also breaks this approach.

We first establish a lower bound on the trace distance between two quantum states that are central to our analysis of the attack’s success probability. The first state represents a database entry that Bob has not queried throughout the protocol, while the second corresponds to an entry that Bob has queried.
\begin{lemma}\label{lem:dist-adv:ot-1-2}
Let $\ket{\phi}:=\frac{1}{\sqrt{|\mX|}} \sum_{x\in \mX}\ket{x}$. Define the two states $\sigma^0:=\ket{\phi}\bra{\phi}$ and $\sigma^1:=\sum_{x\in \mX}\frac{1}{|\mX|}\ket{x}\bra{x}$. Then the following lower bound on the trace distance holds
\begin{equation*}
    \dis(\sigma^0,\sigma^1)\geq 1-\frac{1}{|\mX|}\;.
\end{equation*}
\end{lemma}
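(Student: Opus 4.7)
The plan is to establish the bound directly by exhibiting a measurement (a POVM element) that distinguishes $\sigma^0$ from $\sigma^1$ with advantage at least $1 - 1/|\mX|$, and then invoke the operational meaning of the trace distance stated in the preliminaries. The natural choice is the POVM element $E := \ket{\phi}\bra{\phi}$, since $\sigma^0$ is itself the projector onto $\ket{\phi}$, while $\sigma^1$ is the maximally mixed state on the $|\mX|$-dimensional space spanned by the computational basis.

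First I would compute
\begin{equation*}
\operatorname{tr}(E \sigma^0) = \braket{\phi}{\phi}^2 = 1
\qquad\text{and}\qquad
\operatorname{tr}(E \sigma^1) = \sum_{x \in \mX} \frac{1}{|\mX|} |\braket{\phi}{x}|^2 = \sum_{x \in \mX} \frac{1}{|\mX|}\cdot\frac{1}{|\mX|} = \frac{1}{|\mX|}\, .
\end{equation*}
Hence $|\operatorname{tr}(E(\sigma^0 - \sigma^1))| = 1 - 1/|\mX|$. Since the trace distance equals the maximum over all $0 \le F \le \id$ of $|\operatorname{tr}(F(\sigma^0 - \sigma^1))|$, the desired lower bound follows immediately.

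If one prefers a direct computation of $\|\sigma^0 - \sigma^1\|_1$, I would diagonalise $\sigma^0 - \sigma^1 = \ket{\phi}\bra{\phi} - \tfrac{1}{|\mX|}\id$ on the $|\mX|$-dimensional space: $\ket{\phi}$ is an eigenvector with eigenvalue $1 - 1/|\mX|$, and every vector in its orthogonal complement is an eigenvector with eigenvalue $-1/|\mX|$. Summing the absolute values and dividing by two yields trace distance exactly $1 - 1/|\mX|$, which is in fact a matching upper bound as well.

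There is no real obstacle here; the only subtlety is to be explicit that the identity in the definition of $\sigma^1$ lives on the $|\mX|$-dimensional span of $\{\ket{x}: x \in \mX\}$, so the dimensional factors in the eigenvalue bookkeeping come out correctly. Either the POVM argument or the eigenvalue argument produces a short, self-contained proof; the measurement-based version has the advantage of aligning with the operational interpretation already invoked in the main text to justify the attack.
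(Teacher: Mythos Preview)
Your proposal is correct and essentially the same as the paper's proof: both use the projector $\ket{\phi}\bra{\phi}$ as the distinguishing measurement, compute $\tr(\ket{\phi}\bra{\phi}\,\sigma^0)=1$ and $\tr(\ket{\phi}\bra{\phi}\,\sigma^1)=1/|\mX|$, and then invoke the operational meaning of the trace distance. Your additional eigenvalue argument showing the bound is in fact an equality is a small bonus beyond what the paper gives.
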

\begin{proof}
We consider the projective measurement defined by the two orthogonal projectors $P
_0:=\vecstate{\phi}$ and $P_1:=\id-\vecstate{\phi}$. When we apply this measurement to $\sigma^0$, the probability $p_0$ of obtaining the result $0$ is $\tr(P_0\sigma^0) = \tr(P_0\vecstate{\phi}) = \tr(\vecstate{\phi}\vecstate{\phi}) = 1$. Measuring $\sigma^1$, the probability $p_1$ to obtain the outcome $1$ is $1-\tr(P_0\sigma^1) = 1-1/|\mX|$. Thus, \eqref{eq:op-meaning-trace-distance} implies that 
$\tfrac{1}{2}\cdot(1+\dis(\sigma^0,\sigma^1))\geq\tfrac{1}{2}(p_0+p_1)= 1-\tfrac{1}{2|\mX|}$.
\end{proof}

Lemma~\ref{lem:dist-adv:ot-1-2} implies the following lower bound on the trace distance of the two states of the database after the execution of the quantum private queries protocol with inputs $i \neq j \in \{1,\dots,n\}$.

\begin{lemma}\label{lem:trace-distance:database}
The trace distance of the two states $\rho^i_{n}$ and $\rho^{j}_{n}$ after the execution of the quantum private queries protocol~\cite{GLM08} with indices $i \neq j \in \{1,\dots,n\}$ is 
\begin{equation*}
    \dis(\rho^i_{n}, \rho^{j}_{n}) \geq 1-\frac{1}{\max\{|\mX_i|,|\mX_j|\}}\;,
\end{equation*}
where $\mX_i$ and $\mX_j$ are the sets of valid answers for the queries $i$ and $j$ respectively.
\end{lemma}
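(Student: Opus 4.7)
The plan is to trace Alice's purified attack through to the end of the protocol, write down her resulting marginal states $\rho_n^i$ and $\rho_n^j$ in closed form, and then exhibit a two-outcome projective measurement on one of her register pairs that distinguishes them with the required probability. First I would walk through the rounds: since both of Bob's queries involve only the single index $i$, the QRAM lookups only entangle Bob's answer registers with the pair $(X_i,X'_i)$, while for every $k\neq i$ the pair $(X_k,X'_k)$ remains in the pure state $\vecstate{\phi^k}$ and uncorrelated with Bob. For the queried index, the computational-basis measurement in step~(6) collapses $X_iX'_i$ to $\vecstate{xx}$ for a uniformly random $x\in\mX_i$, and Bob's verification projector onto $(\ket{0}\ket{d}+\ket{i}\ket{x})/\sqrt{2}$ then succeeds with probability one, since immediately after that measurement his second-round register already equals this vector. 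Averaging over $x$ (unknown to Alice), her marginal on $X_iX'_i$ is the classical mixture $\tau_i := \sum_{x\in\mX_i}\tfrac{1}{|\mX_i|}\vecstate{xx}$, so altogether
\[
\rho_n^i \;=\; \tau_i \otimes \bigotimes_{k\neq i}\vecstate{\phi^k}, \qquad \rho_n^j \;=\; \tau_j \otimes \bigotimes_{k\neq j}\vecstate{\phi^k}.
\]

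To lower-bound $\dis(\rho_n^i,\rho_n^j)$, I would mirror the argument of Lemma~\ref{lem:dist-adv:ot-1-2} on the $X_iX'_i$ subsystem, applying the projective measurement with $P_0:=\vecstate{\phi^i}$ and $P_1:=\id-P_0$. A direct overlap computation gives $\tr(P_0\,\vecstate{\phi^i})=1$ and $\tr(P_0\,\tau_i)=\sum_{x\in\mX_i}\tfrac{1}{|\mX_i|}|\braket{\phi^i}{xx}|^2=1/|\mX_i|$, so outcome $0$ appears with probability $1$ under $\rho_n^j$ and with probability $1/|\mX_i|$ under $\rho_n^i$. The operational meaning~\eqref{eq:op-meaning-trace-distance} of the trace distance then yields $\dis(\rho_n^i,\rho_n^j)\geq 1-1/|\mX_i|$. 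Running the symmetric measurement on $X_jX'_j$ instead yields $\dis(\rho_n^i,\rho_n^j)\geq 1-1/|\mX_j|$, and retaining the larger of the two bounds gives the claimed $1-1/\max\{|\mX_i|,|\mX_j|\}$.

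The only delicate step I anticipate is the bookkeeping in round~(6): confirming that the state of Bob's second-round register immediately after his first measurement is exactly the unit vector used in his verification projection, so that no subnormalized conditioning enters when we average over the measurement outcome $x$. Once this is verified, the rest of the argument is the single inner-product computation above followed by an invocation of the operational meaning of the trace distance.
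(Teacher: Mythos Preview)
Your proposal is correct and follows essentially the same route as the paper: compute Alice's post-protocol marginals in the purified attack, observe that on the pair $X_iX'_i$ one obtains the pure state $\vecstate{\phi^i}$ versus the classical mixture $\tau_i$, and distinguish them with the projector $\vecstate{\phi^i}$ exactly as in Lemma~\ref{lem:dist-adv:ot-1-2}. The only cosmetic differences are that the paper first assumes w.l.o.g.\ $|\mX_i|\geq|\mX_j|$ (instead of symmetrizing at the end) and phrases the final step as ``apply monotonicity of $\dis$ under partial trace to $X_iX'_i$, then invoke Lemma~\ref{lem:dist-adv:ot-1-2}'' rather than re-running that lemma's measurement directly on the full state.
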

\begin{proof}
Let $\mX_k$ be the set of valid answers for query $k$ for all $k \in \{1,\dots,n\}$. Note that Alice can always add a fixed dummy value $\ket{d}$ as an additional input. Let Alice choose her database entries randomly from the set of valid answers for each query and keep the purification. W.l.o.g we can assume that $|\mX_i| \geq |\mX_j|$. Then the initial state $\ket{\phi^i}_{X_iX'_i}$ of each data register $i$ of Alice is a pure state 
\begin{equation*}
\ket{\phi^i}_{X_iX'_i} = \sum_{x \in \mX_i}\sqrt{P_X(x)}\ket{xx}_{X_iX'_i}=\sum_{x \in \mX_i}\frac{1}{\sqrt{|\mX_i|}}\ket{xx}_{X_iX'_i}\;,
\end{equation*}
where we used the notation that system $X_i$ represents the database entry and system $X'_i$ is its purification. Register $X_0$ stores the dummy entry. The whole database is initially in the state 
\begin{equation*}
    \ket{\phi}_{X_0X_1\ldots X_n} = \ket{d}_{X_0}\kron\bigotimes_{k=1}^n\ket{\phi^i}_{X_kX'_k}\;.
\end{equation*}

Bob prepares the states $\ket{\psi_0}=\ket{i}$ and $\ket{\psi_1}=(\ket{0}+\ket{i})/\sqrt{2}$. Then he chooses a random bit $a$ and sends first the state $\ket{\psi_a}$ and then the state $\ket{\psi_{1-a}}$. Alice computes the answers to both queries using the QRAM algorithm. The algorithm only accesses the data register $X_0$ (dummy entry) and $X_i$ (the answer to Bob's query). 

The state $\ket{\Psi_{AB}}$ before Alice applies the {QRAM} algorithm, describing both Alice's database and Bob's queries, is (assuming w.l.o.g.\ $a=0$)
\begin{align*}
    \ket{\Psi_{AB}}&=(\ket{i}_{q_1}\ket{0}_{a_1}
    \frac{(\ket{0}+\ket{i})_{q_2}}{\sqrt{2}}\ket{0}_{a_2}\ket{d}_{X_0}\kron\bigotimes_{k=1}^n\ket{\phi^i}_{X_kX'_k}\\
  &= \ket{i}_{q_1}\ket{0}_{a_1}\frac{(\ket{0}+\ket{i})_{q_2}\ket{0}_{a_2}}{\sqrt{2}}\ket{d}_{X_0}\kron\bigotimes_{k=1}^n\left(\sum_{x \in \mX_k}\frac{1}{\sqrt{|\mX|}}\ket{xx}_{X_kX'_k}\right)
\end{align*}
The state $\ket{\Psi_{AB}'}$ \emph{after} Alice has applied the {QRAM} algorithm is (we take the sum over the $i$-th register to the front)
\begin{align*}
    \ket{\Psi_{AB}'}&=
       \sum_{x \in \mX_i}  \left((\ket{i}_{q_1}\ket{x_i}_{a_1}\frac{(\ket{0}\ket{d}+\ket{i}\ket{x_i})_{q_2,a_2}}{\sqrt{2}})\ket{d}_{X_0}
\frac{1}{\sqrt{|\mX_i|}}\ket{xx}_{X_iX'_i}\right)\kron\bigotimes_{k\neq i}\ket{\phi^k}_{X_kX'_k}\;.
\end{align*}

Now Bob will perform the measurement $\{\vecstate{i} \kron \vecstate{x_i}\}_{x_i}$ on register $q_1a_1$ and then depending on the outcome $\{\frac{(\ket{0}\ket{d}+\ket{i}\ket{x_i})_{q_2,a_2}}{\sqrt{2}}\frac{(\bra{0}\bra{d}+\bra{i}\bra{x_i})_{q_2,a_2}}{\sqrt{2}}\}$ on $q_2a_2$. 

Bob obtains the outcome $x_i$ with probability $\nicefrac{1}{|\mX_i|}$ for each value. The second check will always succeed and conditioned on outcome $x_i$ and both checks passing, the conditional state $\ket{\Psi_{A}''}$ on Alice's side of the remaining registers is
\begin{align*}
    \ket{\Psi_{A}''}&=
\left(\ket{d}_{X_0}
\ket{xx}_{X_iX'_i}\right)
    \kron\bigotimes_{k\neq i}\ket{\phi^k}_{X_kX'_k}
\end{align*}
Therefore, the mixed state on Alice's side after Bob's measurement (but ignoring the value of the result) is 
\begin{equation*}
    \rho^i_{n} = \ket{d}\bra{d}_{X_0}\kron\bigotimes_{k = 0}^{i-1}\vecstate{\phi^k}_{X_kX'_k}\kron\sum_{x \in \mX_i}\frac{1}{|\mX_i|}\ket{xx}\bra{xx}_{X_iX'_i}
\kron\bigotimes_{k = i+1}^{n-1}\vecstate{\phi^k}_{X_kX'_k}\;.
\end{equation*} 
If we only consider register $X_i$ and ignore the other subsystems, we recover the two states of Lemma~\ref{lem:dist-adv:ot-1-2}, i.e.,
\begin{align*}
    \ptrs{X_0X_1\ldots X_{i-1}X_{i+1}\ldots X_n}(\rho^{i}_{n})=\sigma^{1}_{n}~\text{and}~
    \ptrs{X_0X_1\ldots X_{i-1}X_{i+1}\ldots X_n}(\rho^j_{n})=\sigma^{0}_{n} \;.
\end{align*}
We can use the fact that the partial trace cannot increase the trace distance (first inequality) and apply Lemma~~\ref{lem:dist-adv:ot-1-2} (second inequality) to obtain
\begin{align*}
\dis(\rho^{i}_{n},\rho^{j}_{n})&\geq
    \dis\left(\ptrs{X_0X_1\ldots X_{i-1}X_{i+1}\ldots X_n}(\rho^{i}_{n}),
    \ptrs{X_0X_1\ldots X_{i-1}X_{i+1}\ldots X_n}(\rho^j_{n})\right)=\dis(\sigma^{1}_{n},\sigma^{0}_{n})
    \geq 1- \frac{1}{|\mX_i|}\;.
\end{align*}
\end{proof}

\subsection{Proof of the success rate of the generic attack retrieving all database entries}

In this section of the appendix, we compute the probability that a cheating Bob can obtain \emph{all} database entries in an oblivious transfer (symmetric private information retrieval) protocol. 

In the following, we will show that for any quantum private queries protocol that is $\eps$-correct and is weakly $\eps$-secure for Bob, there exists an attack that allows Bob to compute $m$ database entries for any $2 \leq m \leq n$. The success probability of the attack depends on the number $m$ of database entries that Bob learns. Since Bob can potentially learn all database entries, this attack can violate any sensible definition of security for Alice, even a very weak definition that only prevents a malicious Bob from learning all database entries.

The proof will use the gentle measurement lemma~\cite{Winter99,Wilde13} and Uhlmann's theorem~\cite{Uhlman76}. First, we consider two pure states $\vecstate{\psiAB^{0}}$ and $\vecstate{\psiAB^{1}}$ of Alice and Bob. If the two marginal states, $\rho_A^0$ and $\rho_A^1$, are close in trace distance, then Uhlmann's theorem guarantees the existence of a unitary acting on Alice's side only that transforms the first state $\vecstate{\psiAB^{0}}$ to a state that is close to the second state $\vecstate{\psiAB^{1}}$. The following lemma shows that this also holds when Alice and Bob share two states that are pure conditioned on all the classical information shared between the two parties (classical communication). The proof of the lemma can be found in~\cite{WTHR11}.
\begin{lemma}\label{lem:classical-attack}
For $b \in \{0,1\}$, let 
\[\rho_{XX'AB}^b=\sum_xP_b(x)\vecstate{x}_{X}\kron \vecstate{x}_{X'}\kron \vecstate{\psiAB^{x,b}}\]
with $\dis(\rho_{X'B}^0,\rho_{X'B}^1)\leq \eps$. Then there exists a unitary $U_{AX}$ such that 
\[\dis(\rho_{XX'AB}'^1,\rho_{XX'AB}^1)\leq \sqrt{2 \eps} \]
where $\rho_{XX'AB}'^1=(U_{XA}\kron \id_{X'B})\rho_{XX'AB}^0(U_{XA}\kron \id_{X'B})^\dagger$.
\end{lemma}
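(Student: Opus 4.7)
The plan is to use a \emph{controlled} unitary $U_{XA}:=\sum_x\vecstate{x}_X\kron W^x_A$, where each $W^x_A$ is obtained from Uhlmann's theorem applied conditionally on $X=x$. This preserves the classical structure of $X$, so the relevant fidelity can be computed exactly via the classical-quantum fidelity formula and then converted to trace distance by Fuchs--van de Graaf.

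First I apply Uhlmann to each pair $\ket{\psi^{x,0}}_{AB},\ket{\psi^{x,1}}_{AB}$ (viewed as purifications on $A$ of their $B$-marginals $\rho_B^{x,0},\rho_B^{x,1}$) to obtain a unitary $W^x_A$ with $|\bra{\psi^{x,1}}(W^x_A\kron\id_B)\ket{\psi^{x,0}}|=F(\rho_B^{x,0},\rho_B^{x,1})$. Setting $U_{XA}:=\sum_x\vecstate{x}_X\kron W^x_A$ and expanding, the classicality of $X$ in $\rho^0$ kills all cross-terms, yielding
\[\rho'^1=\sum_x P_0(x)\,\vecstate{x}_X\kron\vecstate{x}_{X'}\kron\vecstate{\tilde\psi^{x,0}}_{AB},\qquad\ket{\tilde\psi^{x,0}}:=(W^x_A\kron\id_B)\ket{\psi^{x,0}}.\]
Hence $\rho'^1$ and $\rho^1$ are both cq-states diagonal in the common classical basis $\{\ket{x}_X\ket{x}_{X'}\}_x$.

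Next I invoke the cq-fidelity identity $F\bigl(\sum_x p_x\vecstate{x}\kron\sigma^0_x,\sum_x q_x\vecstate{x}\kron\sigma^1_x\bigr)=\sum_x\sqrt{p_xq_x}\,F(\sigma^0_x,\sigma^1_x)$, a direct computation from $\|\sqrt{\cdot}\sqrt{\cdot}\|_1$ on block-diagonal operators. Applied to $\rho'^1,\rho^1$ together with the Uhlmann choice, $F(\rho'^1,\rho^1)=\sum_x\sqrt{P_0(x)P_1(x)}\,F(\rho_B^{x,0},\rho_B^{x,1})$; applied to the marginals $\rho^b_{X'B}=\sum_x P_b(x)\vecstate{x}_{X'}\kron\rho_B^{x,b}$ it gives the same quantity, so
\[F(\rho'^1,\rho^1)=F(\rho^0_{X'B},\rho^1_{X'B})\geq 1-\eps,\]
where the inequality is Fuchs--van de Graaf applied to the hypothesis $\dis(\rho^0_{X'B},\rho^1_{X'B})\leq\eps$. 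The converse Fuchs--van de Graaf direction then yields $\dis(\rho'^1,\rho^1)\leq\sqrt{1-(1-\eps)^2}\leq\sqrt{2\eps}$.

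The main obstacle is the unitarity constraint on $U_{XA}$. A global Uhlmann applied to the joint purifications $\ket{\Psi^b}:=\sum_x\sqrt{P_b(x)}\ket{x}_X\ket{x}_{X'}\ket{\psi^{x,b}}_{AB}$ would produce some unitary on $XA$ mapping $\ket{\Psi^0}$ close to $\ket{\Psi^1}$, but the dephasing channel on $X$ that sends $\vecstate{\Psi^b}$ back to $\rho^b$ does not commute with a generic $XA$-unitary, so the pure-state bound does not transport back to $\rho^0,\rho^1$. Restricting to \emph{controlled} unitaries $U_{XA}=\sum_x\vecstate{x}_X\kron W^x_A$ respects the classicality of $X$ and, perhaps surprisingly, achieves the exact equality $F(\rho'^1,\rho^1)=F(\rho^0_{X'B},\rho^1_{X'B})$ via the cq-formula, so no slack is lost in this reduction.
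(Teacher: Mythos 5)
Your proof is correct. The paper does not actually spell out an argument for this lemma---it only cites~\cite{WTHR11}---and your derivation (a controlled Uhlmann unitary $U_{XA}=\sum_x\vecstate{x}_X\kron W^x_A$, the block-diagonal fidelity identity applied once to $\rho'^1,\rho^1$ and once to the marginals $\rho^0_{X'B},\rho^1_{X'B}$, and Fuchs--van de Graaf in both directions) is a sound, self-contained proof that matches the standard route taken in that reference. Your closing observation about why a single global Uhlmann unitary on $XA$ would not transport the bound back through the dephasing on $X$ correctly identifies the only real subtlety, and your controlled-unitary construction resolves it with no loss, giving $F(\rho'^1,\rho^1)=F(\rho^0_{X'B},\rho^1_{X'B})\geq 1-\eps$ and hence $\dis(\rho'^1,\rho^1)\leq\sqrt{2\eps-\eps^2}\leq\sqrt{2\eps}$.
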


If we can predict the outcome of a measurement applied to a quantum state almost with certainty, then this measurement does not disturb the quantum state too much. Therefore, using the state after the measurement for further processing instead of the original state is guaranteed to lead to (almost) the same results. The gentle measurement lemma captures this intuition and gives a quantitative upper bound on the disturbance caused by the measurements.
\begin{lemma}[Gentle measurement lemma]\cite{Winter99, Wilde13}\label{lem:gentle-measurement}
Consider a density operator $\rho$ and an element of a measurement $\mE_x$. Suppose that the measurement has a high probability of detecting state $\rho$, i.e., we have $\tr(\mE_x\rho) \geq 1-\eps$. Then the post-measurement state 
\begin{equation*}
   \rho'=\frac{\sqrt{\mE_x}\rho\sqrt{\mE_x}^{\dagger}}{\tr(\mE_x\rho)} 
\end{equation*} 
is $\sqrt{\eps}$-close to the original state $\rho$, i.e., 
$\dis(\rho,\rho') \leq \sqrt{\eps}$.
\end{lemma}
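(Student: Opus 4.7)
The plan is to reduce to the pure-state case via a purification of $\rho$, compute one explicit inner product, and then convert from fidelity to trace distance using the Fuchs--van de Graaf inequality. First, I would fix any purification $\ket{\phi}_{AR}$ of $\rho$ on an auxiliary system $R$, so that $\tr_R\proj{\phi}=\rho$. Applying $\sqrt{\mE_x}\otimes\id_R$ to $\ket{\phi}$ produces an unnormalized vector $\ket{\phi'}$ whose squared norm equals
\[
\bra{\phi}(\mE_x\otimes\id_R)\ket{\phi}=\tr(\mE_x\rho)\geq 1-\eps,
\]
and whose reduced state on $A$ is exactly the unnormalized operator $\sqrt{\mE_x}\rho\sqrt{\mE_x}$. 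Normalizing $\ket{\hat\phi}:=\ket{\phi'}/\|\ket{\phi'}\|$ therefore gives a purification of $\rho'$.

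Second, by the monotonicity of fidelity under the partial trace (equivalently, Uhlmann's theorem applied to the specific pair of purifications just chosen),
\[
F(\rho,\rho')\;\geq\;|\langle\phi|\hat\phi\rangle|\;=\;\frac{\tr(\sqrt{\mE_x}\rho)}{\sqrt{\tr(\mE_x\rho)}}.
\]
The single nontrivial observation is the operator inequality $\sqrt{\mE_x}\geq \mE_x$, which holds because $0\leq\mE_x\leq\id$ forces each eigenvalue $\lambda\in[0,1]$ to satisfy $\sqrt{\lambda}\geq\lambda$. Hence $\tr(\sqrt{\mE_x}\rho)\geq\tr(\mE_x\rho)$, so the displayed overlap is at least $\sqrt{\tr(\mE_x\rho)}\geq\sqrt{1-\eps}$; squaring gives $F(\rho,\rho')^2\geq 1-\eps$.

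Third, I would invoke the Fuchs--van de Graaf inequality $D(\rho,\sigma)\leq\sqrt{1-F(\rho,\sigma)^2}$, which immediately yields
\[
D(\rho,\rho')\;\leq\;\sqrt{1-(1-\eps)}\;=\;\sqrt{\eps},
\]
as claimed.

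The main conceptual point is in the second step: one must use the \emph{same} purification $\ket{\phi}$ of $\rho$ on both sides of the comparison so that the action of $\sqrt{\mE_x}\otimes\id_R$ produces a valid (unnormalized) purification of $\sqrt{\mE_x}\rho\sqrt{\mE_x}$ and the fidelity overlap becomes a closed-form trace rather than an Uhlmann-style optimization. Everything else is routine: the operator inequality $\sqrt{E}\geq E$ for $0\leq E\leq\id$ follows from the spectral calculus applied eigenvalue by eigenvalue, and the Fuchs--van de Graaf inequality is a standard relation between trace distance and fidelity that can simply be cited.
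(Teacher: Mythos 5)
Your proof is correct, and the paper itself gives no proof of this lemma---it is stated with a citation to Winter and to Wilde, whose standard argument (purify, bound the overlap $\tr(\sqrt{\mE_x}\rho)/\sqrt{\tr(\mE_x\rho)}$ using $\sqrt{\mE_x}\geq\mE_x$, then apply Fuchs--van de Graaf) is exactly what you reproduce. No gaps; the only cosmetic remark is that $\sqrt{\mE_x}^{\dagger}=\sqrt{\mE_x}$ since $\mE_x\geq 0$, which you use implicitly and correctly.
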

Lemma~\ref{lem:gentle-measurement} implies the following upper bound for the disturbance caused by a measurement to a classical-quantum state.
\begin{lemma}\label{lem:gentle-measurement-cq}
Consider a cq-state $\rho_{XA}=\sum_{x \in \mX} P_X(x)\ket{x}\bra{x}\kron \rho_{A}^x$ and measurement acting on system $A$ defined by a trace-preserving completely positive map $\mathcal{E}$ with $\mathcal{E}:\sigma_{XA} \mapsto \sum_x \ket{x}\bra{x}_{X'}\kron\mathcal{E}_x\sigma_{XA}\mathcal{E}_x^\dagger$. Suppose that for each $x \in \mX$ the measurement operator $\mathcal{E}_x$ has a high probability of detecting state $\rho_{A}^x$, i.e., we have $\tr(\mE_x\mathcal\rho_A^x) \geq 1-\eps$ where $\mE_x=M_x^\dagger M_x$. Then for the state $\rho_{X'XA}'=\mE(\rho_{XA})$ we have that $\Pr[X' = X]\geq 1-\eps$ and the post-measurement state, ignoring the measurement outcome, 
\begin{equation*}
\rho_{XA}'=\sum_x(\id_X \kron M_x)\rho_{XA}(\id_X\kron M_x)^\dagger
\end{equation*}is $(\sqrt{\eps}+\eps)$-close to the original state $\rho$, i.e., 
\begin{equation*}
\dis(\rho_{XA},\rho_{XA}') \leq \sqrt{\eps}+\eps\;.
\end{equation*}
\end{lemma}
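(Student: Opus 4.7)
The lemma makes two claims, and I would handle them separately. Claim (i), $\Pr[X'=X] \geq 1-\eps$, is a one-line computation: since $\rho_{XA}$ has classical register with distribution $P_X$ and conditional state $\rho_A^x$, and $\mE$ records the outcome $y$ into $X'$,
\[\Pr[X'=X] = \sum_x P_X(x)\,\tr(M_x\rho_A^x M_x^\dagger) = \sum_x P_X(x)\,\tr(\mE_x \rho_A^x) \geq 1-\eps,\]
using cyclicity and the hypothesis $\tr(\mE_x\rho_A^x)\geq 1-\eps$ for every $x$.

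For the substantive trace-distance bound, the main tools are the original Gentle Measurement Lemma (applied only to the matched outcome) together with the cq-decomposition~\eqref{eq:trace-distance-cq-states}. I would first compute
\[\rho_{XA}' = \sum_x P_X(x)\, \vecstate{x} \kron \sigma_A^x, \qquad \sigma_A^x := \sum_y M_y \rho_A^x M_y^\dagger,\]
using that the identity on $X$ in the definition of $\rho_{XA}'$ leaves the classical register untouched. Both $\rho_{XA}$ and $\rho_{XA}'$ are thus cq-states on the same basis of $X$ with common classical marginal $P_X$, so~\eqref{eq:trace-distance-cq-states} reduces the problem to
\[\dis(\rho_{XA},\rho_{XA}') = \sum_x P_X(x)\, \dis(\rho_A^x, \sigma_A^x),\]
and it suffices to bound each $\dis(\rho_A^x,\sigma_A^x)$ by $\sqrt{\eps}+\eps$.

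Fix $x$ and set $p_y := \tr(\mE_y\rho_A^x)$ and $\tilde\sigma_y := M_y\rho_A^x M_y^\dagger/p_y$, so that $\sigma_A^x = \sum_y p_y \tilde\sigma_y$ is a convex combination of normalized post-measurement states. Writing $\rho_A^x = \sum_y p_y \rho_A^x$ and using joint convexity of the trace distance,
\[\dis(\rho_A^x,\sigma_A^x) \leq p_x\, \dis(\rho_A^x,\tilde\sigma_x) + \sum_{y\neq x} p_y\, \dis(\rho_A^x,\tilde\sigma_y).\]
The matched branch has $p_x \geq 1-\eps$, so the original Gentle Measurement Lemma yields $\dis(\rho_A^x,\tilde\sigma_x)\leq \sqrt{\eps}$. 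The unmatched branches are handled by the trivial $\dis\leq 1$ and the tail estimate $\sum_{y\neq x} p_y = 1-p_x \leq \eps$. Adding these gives $\dis(\rho_A^x,\sigma_A^x) \leq \sqrt{\eps}+\eps$, and averaging against $P_X$ yields the stated bound.

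The main obstacle is conceptual bookkeeping rather than a hard estimate: one has to verify that the classical register is genuinely preserved by $\mE$ so that the cq-decomposition applies, and that the unmatched branches can be absorbed by a crude bound. The trivial $\dis\leq 1$ suffices precisely because the total weight of the unmatched outcomes is bounded by $\eps$, which is what delivers the additive $\eps$ term; anything sharper on those branches would not improve the stated form of the lemma.
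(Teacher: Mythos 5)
Your proof is correct and takes essentially the same route as the paper's: both reduce to the conditional states via the cq-decomposition of the trace distance, split $\mE(\rho_A^x)$ as a convex combination of normalized post-measurement branches, apply the Gentle Measurement Lemma to the matched branch and absorb the unmatched branches (total weight at most $\eps$) with the trivial bound $\dis\leq 1$. The only cosmetic difference is the order of the two steps (the paper bounds the per-$x$ disturbance first and averages at the end), which does not change the argument.
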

\begin{proof}
For the probability that the measurement outcome is equal to $X$ it holds that
\begin{align*}
\Pr[X'=X]&=\sum_xP_X(x)\tr(\mE_x\mathcal\rho_A^x)
\geq \sum_xP_X(x)(1-\eps)
=1-\eps\;.
\end{align*}
For any $\hat{x} \in \mX$ we can derive the following upper bound for the disturbance caused by the measurement to the conditional state
\begin{align}\label{ineq:disturbance-single}
\begin{split}
\dis(\rho_A^{\hat{x}},\mE(\rho_A^{\hat{x}}))&= \dis(\rho_A^{\hat{x}},\sum_{x \in \mX}M_x\rho_{A}^{\hat{x}}M_x^\dagger)\\ 
    &\leq\sum_{x \in \mX}\tr(\mE_x\rho_A^{{\hat{x}}})\dis(\rho_A^{\hat{x}},M_x\rho_{A}^{\hat{x}}M_x^\dagger/\tr(\mE_x\mathcal\rho_A^{\hat{x}}))\\
    &=\sum_{x \neq \hat{x} \in \mX}\tr(\mE_x\rho_A^{{\hat{x}}})\dis(\rho_A^{\hat{x}},M_x\rho_{A}^{\hat{x}}M_x^\dagger/\tr(\mE_x\mathcal\rho_A^{\hat{x}}))+\tr(\mE_{\hat{x}}\rho_A^{\hat{x}})\dis(\rho_A^{\hat{x}},M_{\hat{x}}\rho_{A}^{\hat{x}}M_{\hat{x}}^\dagger/\tr(\mE_{\hat{x}}\mathcal\rho_A^{\hat{x}}))\\
    &\leq(1-\tr(\mE_{\hat{x}}\rho_A^{\hat{x}}))+\tr(\mE_{\hat{x}}\rho_A^{\hat{x}})\sqrt{\eps}\\
    &\leq \eps+\sqrt{\eps}\;,
\end{split}
\end{align}
where we used the convexity of the trace distance in the first inequality and Lemma~\ref{lem:gentle-measurement} in the second inequality. This implies that the total disturbance caused by the measurement is at most
\begin{align*}
    \dis(\rho_{XA},\rho_{XA}') &= \dis(\rho_{XA}, \mathcal{E}(\rho_{XA)})\\
    &= \dis(\sum_{x \in \mX}P_{X}(x)\ket{x}\bra{x}\kron\rho_{A}^x,\sum_{x \in \mX}P_{X}(x)\ket{x}\bra{x}\kron\mathcal{E}(\rho_{A}^x))\\
    &= \sum_{x \in \mX}P_{X}(x)\dis(\rho_{A}^x,\mathcal{E}(\rho_{A}^x))\\
    &\leq\sqrt{\eps}+\eps\;,
\end{align*}
where we used \eqref{eq:trace-distance-cq-states} to obtain the third equality and~\eqref{ineq:disturbance-single} to obtain the inequality.
\end{proof}

We can now formulate the main technical statement.
\mainresult*
\begin{proof}
We assume that Alice, the owner of the database, chooses all her database entries $X_i$ uniformly at random from the set of valid answers $\mX_i$. For all $i \in \{1,\dots,n\}$, we can assume that the state $\rho_{AB}^i$ at the end of the protocol is pure given all the classical communication.

The correctness of the protocol implies that Bob, the user of the database, can compute $X_i$ from $\rho^i_{AB}$ with probability $1-\eps$. From Lemma~\ref{lem:gentle-measurement-cq} we know that the resulting state $\Tilde{\rho}^i_{AB}$ after computing $X_i$ is $(\sqrt{\eps}+\eps)$-close to $\rho^i_{AB}$, i.e., 
\begin{equation}\label{eq:damage-measurement}
\dis(\rho_{AB}^i, \Tilde{\rho}_{AB}^i) \leq \sqrt{\eps}+\eps\;.
\end{equation}
Since the protocol is secure for an honest Bob, we have that for all $i,j \in \{1,\dots,n\}$
\begin{equation*}
\dis(\rho_{A}^i, \rho_A^j) \leq 2\eps\;.
\end{equation*}
With Lemma \ref{lem:classical-attack} this implies that there exists a unitary $U_{B,i,j}$ such that
\begin{equation}\label{ineq:imposs:ot-full:uhlmann}
\dis\big(\rho_{AB}'^j, \rho_{AB}^j\big)\leq 2\sqrt{\eps}\;,
\end{equation}
where $\rho_{AB}'^j= (\id \kron U_{B,i,j})\rho_{AB}^i(\id \kron U_{B,i,j})^\dagger$. Let $\mathcal{E}_i$ be the operation that computes the value $X_i$ from Bob's system $B$ and let $\Tilde{\rho}_{AB}^i = \mathcal{E}_i(\rho_{AB}^i)$ be the resulting state. 
Then, omitting the identity on $A$, it holds that
\begin{align*}
\begin{split}  \dis(U_{B,i,j}\mathcal{E}_i(\rho_{AB}^i)U_{B,i,j}^\dagger, \rho_{AB}^j)
       &=  \dis(U_{B,i,j}\Tilde{\rho}_{AB}^iU_{B,i,j}^\dagger, \rho_{AB}^j)\\
      &\leq  \dis(U_{B,i,j}\Tilde{\rho}_{AB}^iU_{B,i,j}^\dagger, U_{B,i,j}\rho_{AB}^iU_{B,i,j}^\dagger)+ \dis(U_{B,i,j}\rho_{AB}^iU_{B,i,j}^\dagger, \rho_{AB}^j)\\
      &= \dis( \Tilde{\rho}_{AB}^i, \rho_{AB}^i) + \dis\big(\rho_{AB}'^j, \rho_{AB}^j\big)\\
      &\leq (\sqrt{\eps}+\eps)+2\sqrt{\eps} \\
      &=  3\sqrt{\eps}+\eps\;,
\end{split}
\end{align*}
where the first inequality follows from the triangle inequality, the next equality from  the fact that the trace distance is preserved under unitary transformations and the second inequality follows from~\eqref{eq:damage-measurement} and~\eqref{ineq:imposs:ot-full:uhlmann}.

We have shown that for all $i,j \in \{1,\dots,n\}$ there exists quantum operation $S_{i,j}$, namely first measuring the value $X_i$ and then applying the unitary $U_{B,i,j}$, that computes the value $X_i$ from the state $\rho_{AB}^i$ and results in a state that is $(3\sqrt{\eps}+\eps)$-close to $\rho_{AB}^j$, i.e,
\begin{equation}\label{eq:damage-operation}
    \dis(S_{i,j}(\rho_{AB}^i), \rho_{AB}^j) \leq 3\sqrt{\eps}+\eps\;.
\end{equation}
Thus, the attack allows Bob to compute all of Alice's inputs: Bob has input $1$ and honestly follows the protocol except from keeping his state purified. At the end of the protocol, he computes successively all values $X_j$ using the above operation $S_{i,i+1}$, which means that  he computes the next value $X_{i}$ and then applies a unitary to rotate the state close to the next target state $\rho_{AB}^{i+1}$. Let $S_i:=S_{i-1,i}$. Next, we will show that 
\begin{equation}\label{eq:damage-succession}
    \dis(S_l(S_{l-1}(\ldots S_2(\rho_{AB}^1))), \rho_{AB}^l)\leq (l-1)(3\sqrt{\eps}+\eps)\;.
\end{equation}
From~\eqref{eq:damage-operation}, we know that the statement holds for $l=2$ 
\begin{equation*}
    \dis(S_2(\rho_{AB}^1),\rho_{AB}^2)\leq 3\sqrt{\eps}+\eps\;.
\end{equation*}
Assume that the statement holds for $l-1$, then we have for any $2 \leq l \leq n$
\begin{align*}
\begin{split}
    \dis(S_l(S_{l-1}(\ldots S_2(\rho_{AB}^1))), \rho_{AB}^l)&\leq\dis((S_l(S_{l-1}(\ldots S_2(\rho_{AB}^1))), S_l(\rho_{AB}^{l-1}))+\dis( S_l(\rho_{AB}^{l-1}),\rho_{AB}^l)\\
    &\leq\dis((S_{l-1}(\ldots S_2(\rho_{AB}^1)), \rho_{AB}^{l-1})+\dis( S_l(\rho_{AB}^{l-1}),\rho_{AB}^l)\\
    &\leq (l-2)(3\sqrt{\eps}+\eps) + 3\sqrt{\eps}+ \eps\\
    &= (l-1))(3\sqrt{\eps}+\eps)\;,
    \end{split}
\end{align*}
where we first applied the triangle inequality, then we used the fact that quantum operations cannot increase the trace distance in the second inequality, and finally we used~\eqref{eq:damage-operation} and the assumption that~\eqref{eq:damage-succession} holds for $l-1$ in the third inequality.

Let $\eps_l$  be the probability that Bob fails to compute the $l$-th value correctly. We know that Bob can compute the value $X_j$ from the state $\rho_{AB}^j$ with probability at least $1-\eps$ and that $\eps_1 \leq \eps$. We can upper bound the probability for the other $\eps_l$ adding the distance from the state $\rho_{AB}^l$ given by~\eqref{eq:damage-succession} and the error $\eps$, i.e.,
\begin{align}\label{ineq:prob-fail-step}
    \eps_l &\leq (l-1)(3\sqrt{\eps}+\eps)+\eps
    \leq l\cdot(3\sqrt{\eps}+\eps)\;
\end{align}
Thus, we can apply the union bound to obtain the claimed upper bound for the probability that Bob fails to compute any of the $m \geq 2$ values correctly
\begin{align*}
\sum_{l=1}^{m}\eps_l&\leq \eps_1 +\sum_{l=2}^{m}l\cdot(3\sqrt{\eps}+\eps)\\
&\leq\eps+(3\sqrt{\eps}+\eps)\sum_{l=2}^{m}l\\
&=\eps+(3\sqrt{\eps}+\eps)(m(m+1)/2-1)\\
&=\frac{1}{2}(3\sqrt{\eps}+\eps)(m^2+m)-3\sqrt{\eps}\;.
\end{align*}
We can in the following assume that $\eps \leq \frac{1}{64}$. Otherwise $1-2m^2\sqrt{\eps}\leq 0$ for $m\geq 2$ and the statement of the theorem is trivially true. Let $m \geq 4$. Since $\eps \leq \sqrt{\eps}/8$ and $m \leq m^2/4$, it holds that 
\begin{align*}
\frac{1}{2}(3\sqrt{\eps}+\eps)(m^2+m)-3\sqrt{\eps} &\leq \frac{1}{2}(3\sqrt{\eps}+\sqrt{\eps}/8)(m^2+m^2/4)\\
&=\tfrac{125}{64}\cdot m^2\sqrt{\eps}\\
&\leq 2m^2\sqrt{\eps}\;.
\end{align*}
It can easily be checked that $\frac{1}{2}(3\sqrt{\eps}+\eps)(m^2+m)-3\sqrt{\eps}\leq 2m^2\sqrt{\eps}$ for $m\in \{2,3\}$. This implies the claimed statement for all $2 \leq m \leq n$.
\end{proof}
In the following, we will prove that there exists a constant $c >0$ such that there exists no quantum private queries protocol for databases with $n \geq 2$ entries that is $\eps$-correct according to Correctness (Definition~\ref{def:correctness}) and  $\eps$-secure according to User Privacy (Definition~\ref{def:weak-user-privacy}) and Data Privacy (Definition~\ref{def:weak-database-privacy}) if $\eps\leq c$. 
\begin{corollary}\label{cor:imposs-qpq}
Consider a database with $n \geq 2$ entries, and let $\mX_i$ be the set of valid answers for query $i$ for all $1 \leq i \leq n$. Let there be at least two queries with multiple valid answers, i.e., there exist $i \neq j \in \{1,\ldots,n\}$ with $\min\{|\mX_i|,|\mX_j|\}\geq 2$. If $\eps \leq \frac{1}{64}$, then there is no quantum private queries protocol for this database that is $\eps$-correct and $(\eps,\delta)$-secure for the user and for the owner of the database for any  $\eps \leq \delta \leq 1$.
\end{corollary}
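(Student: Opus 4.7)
The plan is to derive a contradiction by combining Theorem~\ref{thm:main-result} with the data privacy bound of Definition~\ref{def:weak-database-privacy}. Suppose for contradiction that a protocol $\Pi$ exists which is $\eps$-correct, $(\eps,\delta)$-secure for the user, and $(\eps,\delta)$-secure for the owner, with $\eps\leq\delta\leq 1$ and $\eps\leq 1/64$. By hypothesis there exist two indices with at least two valid answers; after relabeling I may assume these are $1$ and $2$, so that $k:=\min\{|\mX_1|,|\mX_2|\}\geq 2$.

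Applying Theorem~\ref{thm:main-result} with $m=2$ supplies an explicit Bob-attack: Bob queries index~$1$ honestly while keeping his state purified, and, only after the protocol has ended, performs a sequence of measurements and unitaries that outputs $(X_1,X_2)$ correctly with probability at least $1-2\cdot 2^2\sqrt{\eps}=1-8\sqrt{\eps}$. The crucial point is that Bob's local purification is invisible to Alice during the protocol, so his reduced state coincides with that of an honest user querying index~$1$. By $\eps$-correctness (Definition~\ref{def:correctness}), Alice therefore accepts this execution with probability at least $1-\eps\geq 1-\delta$, which meets the acceptance premise of Definition~\ref{def:weak-database-privacy}. Data privacy then bounds Bob's success probability against this attack by $\tfrac{1}{k}+\eps\leq\tfrac{1}{2}+\eps$.

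Combining the two bounds yields the inequality $1-8\sqrt{\eps}\leq\tfrac{1}{2}+\eps$, and the final step is a short numerical verification that this fails throughout the regime $\eps\leq 1/64$, producing the desired contradiction. The main obstacle I anticipate is exactly this numerical step: the loose form $1-2m^2\sqrt{\eps}$ of Theorem~\ref{thm:main-result} is itself borderline at $\eps=1/64$ for $m=2$, so to close the gap cleanly across the claimed range I expect to substitute the sharper per-step error $3\sqrt{\eps}+\eps$ extracted inside the proof of Theorem~\ref{thm:main-result}, giving Bob success probability at least $1-6\sqrt{\eps}-3\eps$ for $m=2$, and then verify $6\sqrt{\eps}+4\eps<\tfrac{1}{2}$ in the required range. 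All other ingredients — triggering data privacy via correctness, the indistinguishability of Bob's purification, and the choice $m=2$ — follow directly from the definitions and from the attack already supplied by Theorem~\ref{thm:main-result}.
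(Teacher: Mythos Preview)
Your approach is exactly the paper's: relabel so that $i=1$, $j=2$ with $k\geq 2$, invoke the attack from the proof of Theorem~\ref{thm:main-result} using the sharper internal per-step estimates rather than the coarse $2m^2\sqrt{\eps}$, note that Alice accepts with probability at least $1-\eps\geq 1-\delta$, and contradict Definition~\ref{def:weak-database-privacy}.

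There is, however, a genuine numerical gap in your final step. Your stated success bound $1-6\sqrt{\eps}-3\eps$ is too loose, and the verification you propose, $6\sqrt{\eps}+4\eps<\tfrac{1}{2}$, \emph{fails} at $\eps=1/64$: there $6\sqrt{\eps}+4\eps=6/8+4/64=13/16>1/2$, so no contradiction is obtained. The error is that you have charged the full per-step cost $3\sqrt{\eps}+\eps$ to both steps, whereas the first measurement incurs only the correctness error $\eps$ (there is no prior rotation). The paper keeps the two steps separate: $\eps_1\leq \eps$ and, from inequality~\eqref{ineq:prob-fail-step} with $l=2$, $\eps_2\leq 3\sqrt{\eps}+2\eps$, so that the success probability is at least $1-\eps_1-\eps_2\geq 1-3(\sqrt{\eps}+\eps)$. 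Data privacy then forces $1-3(\sqrt{\eps}+\eps)\leq \tfrac{1}{2}+\eps$, i.e., $3\sqrt{\eps}+4\eps\geq \tfrac{1}{2}$; at $\eps=1/64$ the left side equals $3/8+1/16=7/16<1/2$, which is the required contradiction. Once you correct the bookkeeping for the first step, your argument goes through unchanged.
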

\begin{proof}
We consider the attack of Theorem~\ref{thm:main-result}. Let $i \neq j \in \{1,\ldots,n\}$ with $\min\{|\mX_i|,|\mX_j|\}\geq 2$. W.l.o.g we can assume that $i=1$ and $j=2$. Let $\eps_1$ and $\eps_2$ as defined in the proof of Theorem~\ref{thm:main-result} be the probabilities that Bob fails to compute the first and second entry of the database correctly. The correctness of the protocol implies that $\eps_1 \leq \eps$. From equation~\eqref{ineq:prob-fail-step}, we know that $\eps_2 \leq 3\sqrt{\eps}+2\eps$. Thus, the probability that Bob computes both entries correctly must be at least $1-\eps_1-\eps_2\geq 1-3(\sqrt{\eps}+\eps)$. Since the protocol is $\eps$-secure for the owner of the database, it must hold that $1-3(\sqrt{\eps}+\eps) \leq \frac{1}{2}+\eps$, which implies the claimed statement.
\end{proof}
\remark \label{rem:validity} As explained above, we consider an answer to a query valid if Bob accepts it. Without additional assumptions, Bob accepts any answer supported by the protocol. For example, in the protocol analyzed in~\cite{GLM10}, there are $n$ queries, each with an $m$-bit answer. In this case, the set of valid answers for query $i \in \{1,\ldots,n\}$ is simply the set of all $m$-bit strings, i.e., $\mX_i = \{\ket{0}, \ldots, \ket{2^m-1}\}$.

The security analysis in~\cite{GLM10} further assumes that each query has a \emph{unique} valid answer. This condition is satisfied if every query admits a unique correct answer and Bob can independently verify the correctness of any received answer. Under this assumption, the set of valid answers for each query reduces to a single $m$-bit string. 

More generally, one may assume that Bob can verify the correctness of answers, but that each query may admit a set of correct answers of arbitrary cardinality—that is, any subset of the answers supported by the protocol. Note that Theorem~\ref{thm:main-result} and Corollary~\ref{cor:imposs-qpq} apply under both notions of validity. In the proof of Theorem~\ref{thm:main-result}, if Bob can verify correctness, Alice prepares a database containing a superposition of all \emph{correct} answers for each query. Without this assumption, Alice instead prepares a database containing a superposition of all answers supported by the protocol.

\remark \label{rem:imposs-qpq} Corollary~\ref{cor:imposs-qpq} rules out the existence of secure quantum private query protocols for all databases with at least two queries with multiple valid answers. This leaves open only the case where at most one query admits multiple valid answers. In this case, a computationally unbounded user can reconstruct the entire database—except for the entry corresponding to that single ambiguous query—by computing the unique valid answers to all other queries. Consequently, a dishonest user can recover the full database with just a single query, eliminating any non-trivial security guarantee for the database owner.

If, however, we assume the user of the database is computationally bounded, then the problem can be solved using computationally secure protocols. In such cases, there is no need to resort to cheat-sensitive protocols, as discussed earlier. It is also important to note that the impossibility result remains valid even for protocols that restrict the \emph{honest} database owner to classical database entries. The attack only requires that a \emph{dishonest} database owner be able to randomly select entries from the set of answers accepted by the user.

\remark \label{rem:imposs-qpq-countermeasures} As discussed above, the main security analysis in~\cite{GLM10} relies on the assumption that each query has a \emph{unique} valid answer. If this assumption is dropped and multiple answers are valid for the same query, the security analysis in~\cite{GLM10} does not apply. As a remedy, it is proposed in~\cite{GLM10} to prevent the owner of the database from cheating by requiring her to send a composite answers that contains all possible answers to a query in a pre-established order. In this way, each query has a unique valid answer again. If, however, each query has a unique valid answer, there can be no security guarantee for the owner of the database against a computationally unbounded user, as explained in Remark~\ref{rem:imposs-qpq}. If the user is computationally bounded, there is no need to resort to cheat-sensitive security because there are protocols~\cite{BCKM21} that solve the problem in a non-cheat-sensitive setting, as discussed in more detail earlier. 

As an alternative countermeasure to prevent attacks on databases with multiple valid answers for the same query, the authors of \cite{GLM10} require the user of the database to repeatedly send the same query and check that the owner answers consistently. This countermeasure does not prevent our generic attack because the owner of the database can indefinitely give consistent answers to repeated queries. Therefore, Theorem~\ref{thm:main-result} also applies to protocols that require the user to repeatedly send the same query, and Corollary~\ref{cor:imposs-qpq} rules out that any such protocol is secure. 

\end{document}